\def\w{{\bf w}}
\def\y{{\bf y}}
\def\x{{\bf x}}
\def\x{{\mathbf x}}
\def\w{{\bf w}}
\def\x{{\bf x}}
\def\y{{\bf y}}
\def\z{{\bf z}}
\def\a{{\bf a}}
\def\h{{\bf h}}
\def\be{\begin{equation}}
\def\ee{\end{equation}}
\def\ba{\left[\begin{array}}
\def\ea{\end{array}\right]}
\def\t{{\bf t}}
\def\w{{\bf w}}
\def\x{{\bf x}}
\def\y{{\bf y}}
\def\z{{\bf z}}
\def\a{{\bf a}}
\def\1{{\bf 1}}
\def\g{{\bf g}}
\def\0{{\bf 0}}
\def\hatx{{\hat{\x}}}
\def\erfinv{\mbox{erfinv}}
\def\htheta{\hat{\theta}}
\def\cweak{c_{w}}
\def\cweaknon{c_{w}^+}
\def\betawnon{\beta_{w}^{+}}
\def\hthetawnon{\hat{\theta}_{w}^+}
\def\thetawnon{\theta_{w}^+}
\def\tauw{\tau^{+}}
\newtheorem{theorem}{Theorem}
\newtheorem{lemma}{Lemma}
\begin{document}

\begin{singlespace}

\title {Upper-bounding $\ell_1$-optimization weak thresholds
}
\author{
\textsc{Mihailo Stojnic}
\\
\\
{School of Industrial Engineering}\\
{Purdue University, West Lafayette, IN 47907} \\
{e-mail: {\tt mstojnic@purdue.edu}} }
\date{}
\maketitle

\centerline{{\bf Abstract}} \vspace*{0.1in}

In our recent work \cite{StojnicCSetam09} we considered solving under-determined systems of linear equations with sparse solutions.
In a large dimensional and
statistical context we proved that if the number of equations in the system is proportional
to the length of the unknown vector then there is a sparsity (number
of non-zero elements of the unknown vector) also proportional to the
length of the unknown vector such that a polynomial $\ell_1$-optimization technique
succeeds in solving the system. We provided lower bounds on the proportionality constants that are in a solid
numerical agreement with what one can observe through numerical experiments. Here we create a mechanism that can be used to derive the upper bounds on the proportionality constants. Moreover, the upper bounds obtained through such a mechanism match the lower bounds from \cite{StojnicCSetam09} and ultimately make the latter ones optimal.

\vspace*{0.25in} \noindent {\bf Index Terms: Linear systems of equations;
$\ell_1$-optimization; compressed sensing} .

\end{singlespace}

\section{Introduction}
\label{sec:back}

We start by defining what the problem of interest will be (the same of course was the problem of interest in \cite{StojnicCSetam09}). We will be interested in finding sparse solutions of under-determined systems of linear equations. In a more precise mathematical language we would like to find a $k$-sparse $\x$ such
that
\begin{equation}
A\x=\y \label{eq:system}
\end{equation}
where $A$ is an $m\times n$ ($m<n$) matrix and $\y$ is
an $m\times 1$ vector (see Figure
\ref{fig:model}; here and in the rest of the paper, under $k$-sparse vector we assume a vector that has at most $k$ nonzero
components). Of course, the assumption will be that such an $\x$ exists.

To make writing in the rest of the paper easier, we will assume the
so-called \emph{linear} regime, i.e. we will assume that $k=\beta n$
and that the number of equations is $m=\alpha n$ where
$\alpha$ and $\beta$ are constants independent of $n$ (more
on the non-linear regime, i.e. on the regime when $m$ is larger than
linearly proportional to $k$ can be found in e.g.
\cite{CoMu05,GiStTrVe06,GiStTrVe07}).
\begin{figure}[htb]
\centering
\centerline{\epsfig{figure=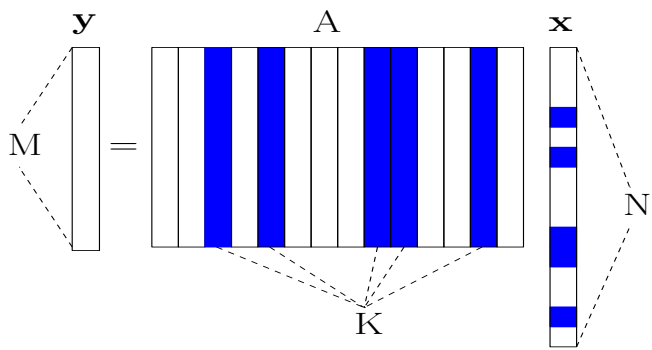,width=9cm,height=4.5cm}}
\caption{Model of a linear system; vector $\x$ is $k$-sparse}
\label{fig:model}
\end{figure}

We generally distinguish two classes of possible algorithms that can be
developed for solving (\ref{eq:system}). The first class of algorithms assumes freedom in designing matrix $A$.
If one has the freedom to design matrix $A$ then the results from \cite{FHicassp,Tarokh,MaVe05} demonstrated that the techniques from
coding theory (based on the coding/decoding of Reed-Solomon codes)
can be employed to determine \emph{any} $k$-sparse $\x$ in
(\ref{eq:system}) for any $0<\alpha\leq 1$ and any
$\beta\leq\frac{\alpha}{2}$ in polynomial time. It is relatively easy to show that under the unique recoverability assumption
$\beta$ can not be greater than $\frac{\alpha}{2}$. Therefore, as long as one is concerned with the unique recovery of
$k$-sparse $\x$ in (\ref{eq:system}) in polynomial time the results from \cite{FHicassp,Tarokh,MaVe05} are
optimal. The complexity of algorithms from
\cite{FHicassp,Tarokh,MaVe05} is roughly $O(n^3)$. In a similar fashion one can, instead of using coding/decoding techniques associated with Reed/Solomon codes,
design the matrix and the corresponding recovery algorithm based on the techniques related to the coding/decoding of
Expander codes (see e.g.
\cite{XHexpander,JXHC08,InRu08} and references therein). In that case recovering $\x$ in
(\ref{eq:system}) is significantly faster for large dimensions $n$. Namely, the complexity of the techniques from e.g. \cite{XHexpander,JXHC08,InRu08}
(or their slight modifications) is usually
$O(n)$ which is clearly for large $n$ significantly smaller than $O(n^3)$. However,
the techniques based on coding/decoding of Expander codes usually do not allow for $\beta$ to be as large as
$\frac{\alpha}{2}$.

The main interest of this paper however will be the algorithms from the second class. Within the second class are the algorithms that should be designed without having the choice of $A$ (instead the matrix $A$ is rather given to us). Designing the algorithms from the second class is substantially harder compared to the design of the algorithms from the first class. The main reason for hardness is that when there is no choice in $A$ the recovery
problem (\ref{eq:system}) becomes NP-hard. The following two algorithms (and their different
variations) have been often viewed historically as solid heuristics for solving (\ref{eq:system}) (in recent years belief propagation type of algorithms are emerging as strong alternatives as well):
\begin{enumerate}
\item \underline{\emph{Orthogonal matching pursuit - OMP}}
\item \underline{\emph{Basis pursuit -
$\ell_1$-optimization.}}
\end{enumerate}
Under certain probabilistic assumptions on the elements of $A$ it can be shown (see e.g. \cite{JATGomp,JAT,NeVe07})
that if $m=O(k\log(n))$
OMP (or slightly modified OMP) can recover $\x$ in (\ref{eq:system})
with complexity of recovery $O(n^2)$. On the other hand a stage-wise
OMP from \cite{DTDSomp} recovers $\x$ in (\ref{eq:system}) with
complexity of recovery $O(n \log n)$. Somewhere in between OMP and BP are recent improvements CoSAMP (see e.g. \cite{NT08}) and Subspace pursuit (see e.g. \cite{DaiMil08}), which guarantee (assuming the linear regime) that the $k$-sparse $\x$ in (\ref{eq:system}) can be recovered in polynomial time with $m=O(k)$ equations.

We will now further narrow down our interest to only the performance of $\ell_1$-optimization. (Variations of the standard $\ell_1$-optimization from e.g.
\cite{CWBreweighted,SChretien08,SaZh08}) as well as those from \cite{SCY08,FL08,GN03,GN04,GN07,DG08} related to $\ell_q$-optimization, $0<q<1$
are possible as well.) Basic $\ell_1$-optimization algorithm finds $\x$ in
(\ref{eq:system}) by solving the following $\ell_1$-norm minimization problem
\begin{eqnarray}
\mbox{min} & & \|\x\|_{1}\nonumber \\
\mbox{subject to} & & A\x=\y. \label{eq:l1}
\end{eqnarray}
Due to its popularity the literature on the use of the above algorithm is rapidly growing. We below restrict our attention to two, in our mind, the most influential works that relate to (\ref{eq:l1}).

The first one is \cite{CRT} where the authors were able to show that if
$\alpha$ and $n$ are given, $A$ is given and satisfies the restricted isometry property (RIP) (more on this property the interested reader can find in e.g. \cite{Crip,CRT,Bar,Ver,ALPTJ09}), then
any unknown vector $\x$ with no more than $k=\beta n$ (where $\beta$
is a constant dependent on $\alpha$ and explicitly
calculated in \cite{CRT}) non-zero elements can be recovered by
solving (\ref{eq:l1}). As expected, this assumes that $\y$ was in
fact generated by that $\x$ and given to us. The case when the
available $\y$'s are noisy versions of real $\y$'s is also of
interest \cite{CT,CRT,HN,W}. Although that case is not of primary interest in the present paper
it is worth mentioning that the recent
popularity of $\ell_1$-optimization in compressed sensing is
significantly due to its robustness with respect to noisy
$\y$'s. (Of course, the main reason for its popularity is its
ability to solve (\ref{eq:system}) for a very wide range of matrices
$A$; more on this universality from a statistical point of view the interested reader can find in \cite{DonTan09Univ}.)

However, the RIP is only a \emph{sufficient}
condition for $\ell_1$-optimization to produce the $k$-sparse solution of
(\ref{eq:system}). Instead of characterizing $A$ through the RIP
condition, in \cite{DonohoUnsigned,DonohoPol} Donoho looked at its geometric properties/potential. Namely,
in \cite{DonohoUnsigned,DonohoPol} Donoho considered polytope obtained by
projecting the regular $n$-dimensional cross-polytope $C_p^n$ by $A$. He then established that
the solution of (\ref{eq:l1}) will be the $k$-sparse solution of
(\ref{eq:system}) if and only if
$AC_p^n$ is centrally $k$-neighborly
(for the definitions of neighborliness, details of Donoho's approach, and related results the interested reader can consult now already classic references \cite{DonohoUnsigned,DonohoPol,DonohoSigned,DT}). In a nutshell, using the results
of \cite{PMM,AS,BorockyHenk,Ruben,VS}, it is shown in
\cite{DonohoPol}, that if $A$ is a random $m\times n$
ortho-projector matrix then with overwhelming probability $AC_p^n$ is centrally $k$-neighborly (as usual, under overwhelming probability we in this paper assume
a probability that is no more than a number exponentially decaying in $n$ away from $1$). Miraculously, \cite{DonohoPol,DonohoUnsigned} provided a precise characterization of $m$ and $k$ (in a large dimensional context) for which this happens.

It should be noted that one usually considers success of
(\ref{eq:l1}) in recovering \emph{any} given $k$-sparse $\x$ in (\ref{eq:system}). It is also of interest to consider success of
(\ref{eq:l1}) in recovering
\emph{almost any} given $\x$ in (\ref{eq:system}). We below make a distinction between these
cases and recall on some of the definitions from
\cite{DonohoPol,DT,DTciss,DTjams2010,StojnicCSetam09,StojnicICASSP09}.

Clearly, for any given constant $\alpha\leq 1$ there is a maximum
allowable value of $\beta$ such that for \emph{any} given $k$-sparse $\x$ in (\ref{eq:system}) the solution of (\ref{eq:l1})
is with overwhelming probability exactly that given $k$-sparse $\x$. We will refer to this maximum allowable value of
$\beta$ as the \emph{strong threshold} (see
\cite{DonohoPol}). Similarly, for any given constant
$\alpha\leq 1$ and \emph{any} given $\x$ with a given fixed location of non-zero components and a given fixed combination of its elements signs
there will be a maximum allowable value of $\beta$ such that
(\ref{eq:l1}) finds that given $\x$ in (\ref{eq:system}) with overwhelming
probability. We will refer to this maximum allowable value of
$\beta$ as the \emph{weak threshold} and will denote it by $\beta_{w}$ (see, e.g. \cite{StojnicICASSP09,StojnicCSetam09}).

In our own work \cite{StojnicCSetam09} we provided a novel probabilistic framework for performance characterization of (\ref{eq:l1}) (the framework seems rather powerful; in fact, we found hardly any sparse type of problem that the framework was not able to handle with almost impeccable precision). Using that framework we obtained lower bounds on $\beta_w$. These lower bounds were in an excellent numerical agreement with the values obtained for $\beta_w$ in \cite{DonohoPol}. One would therefore be tempted to believe that our lower bounds from \cite{StojnicCSetam09} are tight. In this paper we design a mechanism that can be used to compute the upper bounds on $\beta_w$ (as it was the case with the framework of \cite{StojnicCSetam09}, the new framework does not seem to be restricted in any way to the $\ell_1$ type of sparsity). The obtained upper bounds will match the lower bounds computed in \cite{StojnicCSetam09} and essentially make them optimal. We should as an important side point mention that in a companion paper \cite{StojnicEquiv10}) we designed an approach that reveals even qualitative (not only numerical) agreement between the results from \cite{StojnicCSetam09} and those from \cite{DonohoPol}. Obviously, using the results of \cite{StojnicEquiv10} one can then also argue that the lower bounds of \cite{StojnicCSetam09} are optimal. The point of the present work, though, should be viewed in a much broader context. It gives a general framework for bounding thresholds without relying on making them equivalent to the known optimal ones. Or in other words, it extends the range of applicability to the cases where the optimal ones may not be known. A collection of interesting applications of this framework will be presented in a few forthcoming companion papers. 

We organize the rest of the paper in the following way. In Section
\ref{sec:theorems} we introduce two key theorems that will be the heart of our subsequent analysis.
In Section
\ref{sec:unsigned} we create the mechanism for computing the upper bounds on $\beta_w$ in the case of general sparse signals $\x$ for a class of random matrices $A$. In Section \ref{sec:signed} we will then specialize results from Section \ref{sec:unsigned} to the so-called signed vectors $\x$.
Finally, in Section \ref{sec:discuss} we discuss obtained results.

\section{Key theorems}
\label{sec:theorems}

In this section we introduce two useful theorems that will be of key importance in our subsequent
analysis. First we recall on a null-space characterization of
$A$ that guarantees that the solution of
(\ref{eq:l1}) is the $k$-sparse solution of (\ref{eq:system}). Moreover, the characterization will establish this for any $\beta n$-sparse $\x$ with a fixed location of nonzero components and a fixed combination of signs of its elements. Since the analysis will clearly be irrelevant with respect to what particular location and what particular combination of signs of nonzero elements are chosen, we can for the simplicity of the exposition and without loss of generality assume that the components $\x_{1},\x_{2},\dots,\x_{n-k}$ of $\x$ are equal to zero and the components $\x_{n-k+1},\x_{n-k+2},\dots,\x_n$ of $\x$ are smaller than or equal to zero. Moreover, throughout the paper we will call such an $\x$ $k$-sparse and non-positive. Under this assumption we have the following theorem from \cite{StojnicICASSP09} that provides
such a characterization (similar characterizations can be found in
\cite{DH01,FN,LN,Y,XHapp,SPH,DTbern}; furthermore, if instead of $\ell_1$ one, for
example, uses an $\ell_q$-optimization ($0<q<1$) in (\ref{eq:l1}) then
characterizations similar to the ones from
\cite{DH01,FN,LN,Y,XHapp,SPH,DTbern} can be derived as well
\cite{GN03,GN04,GN07}).
\begin{theorem}(Nonzero part of $\x$ has fixed signs and location)
Assume that an $m\times n$ matrix $A$ is given. Let $\x$
be a $k$-sparse non-positive vector. Also let $\x_1=\x_2=\dots=\x_{n-k}=0.$
Further, assume that $\y=A\x$ and that $\w$ is
an $n\times 1$ vector. If
\begin{equation}
(\forall \w\in \textbf{R}^n | A\w=0) \quad  \sum_{i=n-k+1}^n \w_i<\sum_{i=1}^{n-k}|\w_{i}|
\label{eq:thmeqgenweak1}
\end{equation}
then the solution of (\ref{eq:l1}) is $\x$. Moreover, if
\begin{equation}
(\exists \w\in \textbf{R}^n | A\w=0) \quad  \sum_{i=n-k+1}^n \w_i>\sum_{i=1}^{n-k}|\w_{i}|
\label{eq:thmeqgenweak2}
\end{equation}
then there will be a $k$-sparse nonnegative $\x$ that satisfies (\ref{eq:system}) and is not the solution of (\ref{eq:l1}).
\label{thm:thmgenweak}
\end{theorem}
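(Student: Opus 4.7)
My plan is to prove this by the standard null-space argument, splitting into the two implications and exploiting the fact that $\x$ is non-positive on its support.

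For the first implication, I would let $\z$ be any feasible point of (\ref{eq:l1}), i.e. $A\z=\y$, and set $\w=\z-\x$, which lies in $\ker(A)$. When $\w\ne 0$ the hypothesis (\ref{eq:thmeqgenweak1}) applies. Expanding the norm along the partition into zero coordinates and support coordinates gives
\begin{equation*}
\|\z\|_1 \;=\; \sum_{i=1}^{n-k}|\w_i| \;+\; \sum_{i=n-k+1}^{n}|\x_i+\w_i|.
\end{equation*}
Since $\x_i\le 0$ on the support, the elementary bound $|\x_i+\w_i|\ge -(\x_i+\w_i)=|\x_i|-\w_i$ yields
\begin{equation*}
\|\z\|_1 \;\ge\; \sum_{i=1}^{n-k}|\w_i| + \|\x\|_1 - \sum_{i=n-k+1}^{n}\w_i \;>\; \|\x\|_1,
\end{equation*}
where the strict inequality uses (\ref{eq:thmeqgenweak1}). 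Hence $\x$ is the unique minimizer in (\ref{eq:l1}).

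For the second implication, I would use the offending $\w\in\ker(A)$ guaranteed by (\ref{eq:thmeqgenweak2}) to construct an explicit $k$-sparse non-negative bad example. Pick any constant $C\ge\max_i|\w_i|$, define $\x$ by $\x_i=0$ for $i\le n-k$ and $\x_i=C$ for $i>n-k$, and let $\y=A\x$; then $\x$ is $k$-sparse and non-negative. Set $\z=\x-\w$, which is also feasible since $A\z=A\x=\y$. Because $\x_i\ge\w_i$ on the support, $|\x_i-\w_i|=\x_i-\w_i$, so
\begin{equation*}
\|\z\|_1 \;=\; \sum_{i=1}^{n-k}|\w_i| + \sum_{i=n-k+1}^{n}(\x_i-\w_i) \;=\; \|\x\|_1 + \sum_{i=1}^{n-k}|\w_i| - \sum_{i=n-k+1}^{n}\w_i \;<\; \|\x\|_1
\end{equation*}
by (\ref{eq:thmeqgenweak2}). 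Thus $\x$ is not the solution of (\ref{eq:l1}), producing the desired counterexample.

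There is not really a hard step here: the entire content is an expansion of $\|\x+\w\|_1$ exploiting the sign restriction on $\x$, together with a trivial construction for the converse. The only minor care points are (i) reading the hypothesis (\ref{eq:thmeqgenweak1}) as quantifying over nonzero $\w\in\ker(A)$ (for $\w=0$ both sides vanish), and (ii) choosing $C$ large enough in the converse so that $\x_i-\w_i\ge 0$ on the support, which is what lets the two $\x_i$ terms cancel in the norm computation and isolates the sign of $\sum_{i=n-k+1}^n\w_i-\sum_{i=1}^{n-k}|\w_i|$.
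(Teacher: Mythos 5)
Your proof is correct, and the forward implication is essentially the paper's own argument written directly rather than by contradiction: your elementary bound $|\x_i+\w_i|\ge -(\x_i+\w_i)=|\x_i|-\w_i$ on the support is precisely the ``removing the absolute values'' step that the paper reproduces from Theorem 2 of \cite{StojnicICASSP09}. The only genuine divergence is in the converse construction. The paper builds a \emph{non-positive} counterexample tailored to $\w$: it sets $\x_j=-\w_j$ on the support coordinates where $\w_j>0$ and takes $\x_j\le 0$ on the rest, which is exactly what is needed to make the absolute values resolve coordinate-by-coordinate (this requires the small case analysis over the index set ${\cal J}$ of positive $\w_j$'s). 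You instead take $\x$ constant and large, $\x_i=C\ge\max_i|\w_i|$, on the support, producing a \emph{nonnegative} counterexample for which $\x-\w$ is the strictly cheaper feasible point. Both constructions are valid: since the null space is closed under $\w\mapsto-\w$, the recovery conditions for the nonnegative and non-positive sign classes on a fixed support coincide, so a counterexample of either sign falsifies the corresponding recovery claim. Your version has the incidental advantage of matching the word ``nonnegative'' in the theorem statement as written (the paper's own proof constructs a non-positive $\x$, so its statement and proof disagree on the sign there, apparently a typo), and it replaces the ${\cal J}$ case analysis by the single inequality $C\ge \w_i$. Your observation that (\ref{eq:thmeqgenweak1}) must be read as quantifying over nonzero $\w$ is also correct and is left implicit in the paper.
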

\begin{proof}
The first part follows directly from Theorem $2$ in \cite{StojnicICASSP09}. For the completeness we just sketch the argument again. Let $\hatx$ be the solution of (\ref{eq:l1}). We want to show that if (\ref{eq:thmeqgenweak1}) holds then $\hatx=\x$. To that end assume opposite, i.e. assume that (\ref{eq:thmeqgenweak1}) holds but $\hatx\neq\x$. Then since $\y=A\hatx$ and $\y=A\x$ one must have $\hatx =\x+\w$ with $\w$ such that $A\w=0$. Also, since $\hatx$ is the solution of (\ref{eq:l1}) one has that
\begin{equation}
\sum_{i=1}^n|-\x_i-\w_i|\leq \sum_{i=1}^{n}|\x_i|.\label{eq:absval}
\end{equation}
Now, the key point observed for the first time in Theorem $2$ in \cite{StojnicICASSP09} comes into play. By just simply ``removing the absolute values" on the last $k$ elements of the sum on the left-hand side one has that the following must hold as well
\begin{equation}
\sum_{i=1}^{n-k} |-\x_i-\w_i|+\sum_{i=n-k+1}^n (-\x_i-\w_i)\leq \sum_{i=1}^{n}|\x_i|.\label{eq:noabsval}
\end{equation}
Since we assumed that $\x_i\leq 0,n-k+1\leq i\leq n$, and $\x_i=0,1\leq i\leq n-k$, one from (\ref{eq:noabsval}) obtains
\begin{equation}
\sum_{i=1}^{n-k} |\w_i|-\sum_{i=n-k+1}^n \w_i\leq 0.\label{eq:wcon}
\end{equation}
or equivalently
\begin{equation}
\sum_{i=1}^{n-k} |\w_i|\leq\sum_{i=n-k+1}^n \w_i.\label{eq:wcon1}
\end{equation}
Clearly, (\ref{eq:wcon1}) contradicts (\ref{eq:thmeqgenweak1}) and $\hatx\neq\x$ can not hold. Therefore $\hatx=\x$ which is exactly what the first part of the theorem claims.

For the ``moreover" part assume that (\ref{eq:thmeqgenweak2}) holds, i.e. we assume
\begin{equation}
(\exists \w\in \textbf{R}^n | A\w=0) \quad  \sum_{i=n-k+1}^n \w_i>\sum_{i=1}^{n-k}|\w_{i}|\label{eq:thmeqgenweak3}
\end{equation}
and want to show that there is a non-positive $k$-sparse $\x$ such that (\ref{eq:absval}) holds (with a strict inequality). This would imply that there is a non-positive $\x$ such that $A\x=\y$ and $\x$ is not the solution of (\ref{eq:l1}). Since (\ref{eq:wcon1}) is just rewritten (\ref{eq:thmeqgenweak3}) one can go backwards from (\ref{eq:wcon1}) to (\ref{eq:absval}) (just additionally making all the inequalities strict in the process). The only problem will happen in a backward jump from (\ref{eq:noabsval}) to (\ref{eq:absval}). That jump will be fine pretty much for any non-positive $k$-sparse $\x$ if all $\w_i$'s are negative. If some of them are not then one has to design a specific non-positive $k$-sparse $\x$ for which jump from (\ref{eq:noabsval}) to (\ref{eq:absval}) is justified or in other words for which (\ref{eq:noabsval}) and (\ref{eq:absval}) are equivalent. To that end let us assume that $\w_j>0$ for $j\in{\cal J},{\cal J}\subset \{n-k+1,n-k+2,\dots,n\}$ and let $\bar{\cal J}$ be such that ${\cal J}\bigcup \bar{\cal J}=\{n-k+1,n-k+2,\dots,n\}$ and ${\cal J}\bigcap \bar{\cal J}=\emptyset$. Then for $\x$ such that $\x_j=0$ for $1\leq j\leq n-k$,  $\x_j=-\w_j$ for $j\in{\cal J}$, and $\x_j\leq 0$ for $j\in \bar{\cal J}$ one has that (\ref{eq:thmeqgenweak3}) implies
\begin{equation}
\sum_{i=1}^n|-\x_i-\w_i|< \sum_{i=1}^{n}|\x_i|.\label{eq:absvalrev}
\end{equation}
or in other words that $\x$ can not be the solution of (\ref{eq:l1}). This concludes the proof of the second (``moreover") part.
\end{proof}

Before proceeding further we would like to say a few words about the above theorem. In our opinion the first part of the theorem that was put forth in \cite{StojnicICASSP09} is the unsung hero of all the success achieved in the thresholds analysis through various frameworks that we eventually designed. It was  fist recognized in \cite{StojnicICASSP09} that it could lead to the optimal performance characterizations of $\ell_1$-optimization.  However, the analysis in \cite{StojnicICASSP09} stopped somewhat short of the ultimate goal and it achieved only a moderate success in performance characterization of $\ell_1$-optimization (of course, not to take anything away from the strength of the first part of the above theorem, it was strong enough even back then when the framework of \cite{StojnicICASSP09} was designed; it is just that the author of \cite{StojnicICASSP09} was apparently not skilled enough to utilize it). Eventually, it was put to an ultimate utilization in \cite{StojnicCSetam09}. While the success of the framework designed in \cite{StojnicCSetam09} is a story on its own we feel that this simple ``removing absolute values" observation made in Theorem $2$ in \cite{StojnicICASSP09} is a remarkable piece of the mosaic that makes everything work to perfection.

Now, with regard to the second part of the above theorem, the story is somewhat similar. The first thing one should say is that the above proof of it is really nothing too original; it follows the well known ``converse" strategy of the corresponding proofs when the absolute values are present (see, e.g. \cite{DH01,GN03}). It is just that we never presented this ``remove the absolute values" observation in a converse way before. Basically, we did not find the second part of the theorem to be of any (let alone much) use if one were to create the lower bounds on the thresholds. However, as the reader might guess, if one is concerned with proving the upper bounds the second part of the above theorem becomes the same type of the unsung hero that the first one was for the success of the framework of \cite{StojnicCSetam09}. Below we use it to create a machinery as powerful as the one from \cite{StojnicCSetam09} that provides the corresponding framework for upper-bounding the thresholds.

Before moving to the design of the framework, we would also like to say a few words about a possible design of the matrix $A$ that would satisfy the conditions of Theorem \ref{thm:thmgenweak}. Designing matrix $A$ such that
(\ref{eq:thmeqgenweak1}) holds would not be that hard.
The problem is that one does not know \emph{a priori} which $k$ components of $\x$ will be nonzero and which signs they will have. That would essentially force one to design $A$ such that (\ref{eq:thmeqgenweak1}) holds for any subset of $\{1,2,\dots,n\}$ of cardinality $k$ and any combination of signs on that subset. If one
assumes that $m$ and $k$ are proportional to $n$ (the case of our
interest in this paper) this is an enormous combinatorial task and the construction of such a deterministic
matrix $A$ is clearly not easy (in fact, as observed in e.g. \cite{StojnicCSetam09} one may say that it is one of the most fundamental open problems in the area of theoretical compressed sensing; more on an equally important inverse problem of checking if a given matrix satisfies the condition of Theorem \ref{thm:thmgenweak} for any subset of $\{1,2,\dots,n\}$ of cardinality $k$ and any combination of signs, the interested reader can find in \cite{JudNem08,DaEl08}). On the other hand, turning to
random matrices significantly simplifies things. As we will see later in the paper, Gaussian random matrices $A$ will turn out to be a very convenient choice.
The following phenomenal result from \cite{Gordon88} that relates to such matrices will be the key ingredient in the analysis that will follow.
\begin{theorem}(\cite{Gordon88})
\label{thm:Gordonmesh1} Let $X_{ij}$ and $Y_{ij}$, $1\leq i\leq n,1\leq j\leq m$, be two centered Gaussian processes which satisfiy the following inequalities for all choices of indices
\begin{enumerate}
\item $E(X_{ij}^2)=E(Y_{ij}^2)$
\item $E(X_{ij}X_{ik})=E(Y_{ij}Y_{ik})$
\item $E(X_{ij}X_{lk})=E(Y_{ij}Y_{lk}), i\neq l$.
\end{enumerate}
Then
\begin{equation*}
P(\bigcap_{i}\bigcup_{j}(X_{ij}\geq \lambda_{ij}))\leq P(\bigcap_{i}\bigcup_{j}(Y_{ij}\geq \lambda_{ij})).
\end{equation*}
\end{theorem}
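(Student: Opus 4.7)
The plan is to prove this via Gaussian interpolation, the technique going back to Slepian and refined by Gordon. Taking independent copies of the two processes, I would form the centered Gaussian family
$$Z_{ij}(t) := \sqrt{t}\, Y_{ij} + \sqrt{1-t}\, X_{ij}, \qquad t \in [0,1],$$
so that $Z(0)$ has the law of $X$ and $Z(1)$ has the law of $Y$. Setting
$$G(t) := \Prob\Bigl(\bigcap_{i}\bigcup_{j}\{Z_{ij}(t) \geq \lambda_{ij}\}\Bigr),$$
the claim reduces to showing that $G$ is monotone in the right direction. To make $G$ differentiable in $t$, I would first replace the indicator of the target event by the product-type smooth approximation
$$F_{\epsilon}(z) \;=\; \prod_{i}\Bigl(1 - \prod_{j}\bigl(1 - \chi_{\epsilon}(z_{ij}-\lambda_{ij})\bigr)\Bigr),$$
where $\chi_{\epsilon}$ is a smooth nondecreasing surrogate for the Heaviside step, and recover $G(t)$ as $\epsilon \downarrow 0$.

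The key computational step is the Gaussian integration-by-parts (Stein) formula applied along the interpolation, which yields
$$\frac{d}{dt}\Exp[F_{\epsilon}(Z(t))] \;=\; \frac{1}{2}\sum_{(i,j),(l,k)}\bigl[\Exp(Y_{ij}Y_{lk})-\Exp(X_{ij}X_{lk})\bigr]\,\Exp\!\left[\frac{\partial^{2}F_{\epsilon}}{\partial z_{ij}\,\partial z_{lk}}(Z(t))\right].$$
Condition~1 kills the diagonal contributions $(i,j)=(l,k)$. For the off-diagonal contributions, I would exploit the product form of $F_{\epsilon}$ to read off the sign of the mixed partial: within a single ``row'' factor $1 - \prod_{j}(1-\chi_{\epsilon})$ the two differentiations enter with opposite signs, so $\partial^{2}F_{\epsilon}/\partial z_{ij}\partial z_{ik}\le 0$ when $j\neq k$; across different rows the mixed partial factors as a product of two nonnegative first derivatives times nonnegative remaining row-factors, giving $\partial^{2}F_{\epsilon}/\partial z_{ij}\partial z_{lk}\ge 0$ when $i\neq l$. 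Matched against conditions~2 and~3, every summand in $G'(t)$ carries the same sign, and integrating from $0$ to $1$ (then sending $\epsilon\downarrow 0$) gives the desired comparison.

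The main obstacle, and the genuine content that distinguishes Gordon's result from Slepian's, is precisely this sign bookkeeping for the mixed second derivatives. Slepian's classical argument handles only a single layer of union or intersection with a single direction of covariance comparison; the $\bigcap_{i}\bigcup_{j}$ shape is exactly the combinatorial structure for which the ``within-$i$'' and ``across-$i$'' second partials of $F_{\epsilon}$ carry opposite signs, which is what allows the covariance hypotheses on those two regimes to point in opposite directions while still yielding a one-sided conclusion. Beyond this, the remaining ingredients—justifying differentiation under the expectation, validating the integration-by-parts identity along the interpolation, and passing to the limit in the smoothing parameter—are routine and can be handled by standard dominated-convergence arguments once uniform integrability of $F_{\epsilon}$ and its partials is established.
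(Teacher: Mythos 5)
The paper does not actually prove this statement: Theorem~\ref{thm:Gordonmesh1} is imported verbatim from \cite{Gordon88} with a citation and no argument, so there is no in-paper proof to compare against. Your interpolation argument is the standard (and correct) proof of Gordon's comparison inequality: the smart path $Z(t)=\sqrt{t}\,Y+\sqrt{1-t}\,X$ with independent copies, the product-form smoothing of the indicator of $\bigcap_i\bigcup_j\{z_{ij}\ge\lambda_{ij}\}$, the Gaussian integration-by-parts identity for $\frac{d}{dt}E[F_\epsilon(Z(t))]$, and the sign bookkeeping $\partial^2F_\epsilon/\partial z_{ij}\partial z_{ik}\le 0$ within a row versus $\partial^2F_\epsilon/\partial z_{ij}\partial z_{lk}\ge 0$ across rows are all right, and the limiting steps you defer are indeed routine (the boundary events $\{Z_{ij}(t)=\lambda_{ij}\}$ are null for nondegenerate Gaussians, and a degenerate process can be handled by an independent perturbation). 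One point worth making explicit: as transcribed in the paper, conditions~2 and~3 are stated as \emph{equalities}, under which the interpolation derivative vanishes identically and the conclusion is a trivial equality; Gordon's actual hypotheses are $E(X_{ij}X_{ik})\ge E(Y_{ij}Y_{ik})$ and $E(X_{ij}X_{lk})\le E(Y_{ij}Y_{lk})$ for $i\ne l$, and it is precisely these one-sided hypotheses that your sign analysis is calibrated to --- the $\le 0$ within-row covariance gap multiplies the $\le 0$ mixed partial, and the $\ge 0$ across-row gap multiplies the $\ge 0$ mixed partial, so every term of $G'(t)$ is nonnegative. Your sketch therefore proves the correct (stronger) version of the theorem, of which the paper's equality version is a degenerate special case; what it buys over the paper's bare citation is a self-contained argument that also makes transparent why the $\bigcap_i\bigcup_j$ event shape is exactly the one compatible with oppositely-directed covariance comparisons in the two index regimes.
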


\section{Upper-bounding $\beta_w$ -- general $\x$} \label{sec:unsigned}

In this section we probabilistically analyze validity of the null-space characterization given in the second part of Theorem \ref{thm:thmgenweak}. Essentially, we will design a mechanism for computing upper bounds on $\beta_w$ (in fact, since it will be slightly more convenient we will actually determine lower bounds on $\alpha$; that is of course conceptually  the same as finding the upper-bounds on $\beta$).

We start by defining a quantity $\tau$ that will play one of the key roles below
\begin{eqnarray}
\tau(A) =  \min & & (\sum_{i=1}^{n-k}|\w_i|-\sum_{i=n-k+1}^{n}\w_i) \nonumber \\
\mbox{subject to} & &  A\w=0\nonumber \\
& & \|\w\|_2\leq 1.\label{eq:deftau}
\end{eqnarray}
Now, we will in the rest of the paper assume that the entries of $A$ i.i.d. standard normal random variables. Then one can say that for any $\alpha$ and $\beta$ for which
\begin{equation}
\lim_{n\rightarrow\infty}P(\tau(A)<0)=1, \label{eq:mcrit}
\end{equation}
there is a $k$-sparse $\x$ (from a set of $\x$'s with a given fixed location of nonzero components and a given fixed combination of their signs) which (\ref{eq:l1}) with probability $1$ fails to find. For a fixed $\beta$ our goal will be to find the largest possible $\alpha$ for which (\ref{eq:mcrit}) holds, i.e. for which (\ref{eq:l1}) fails with probability $1$.

Before going through the randomness of the problem and evaluation of $P(\tau(A)<0)$ we will try to provide a more explicit expression for $\tau$ then the one given by the optimization problem in (\ref{eq:deftau}). We proceed by slightly rephrasing (\ref{eq:deftau}):
\begin{eqnarray}
\tau(A) =  \min_{\t,\w} & & (\sum_{i=1}^{n-k}\t_i-\sum_{i=n-k+1}^{n}\w_i) \nonumber \\
\mbox{subject to} & & -\t_i\leq \w_i\leq \t_i, 1\leq i\leq n-k \nonumber \\
& &  A\w=0\nonumber \\
& & \|\w\|_2\leq 1.\label{eq:deftau1}
\end{eqnarray}
Now we write a partial dual over $\t$ of the optimization problem in (\ref{eq:deftau1}) (the strong duality trivially holds throughout the rest of this derivation)
\begin{eqnarray}
\tau(A)= \max_{\lambda^{(1)},\lambda^{(2)}}\min_{\t,\w} &  & \sum_{i=1}^{n-k}\t_i
-\sum_{i=n-k+1}^{n}\w_i+\sum_{i=1}^{n-k}\lambda_i^{(1)} (\w_i-\t_i)+\sum_{i=1}^{n-k}\lambda_i^{(2)} (-\w_i-\t_i)  \nonumber \\
\mbox{subject to} & & \lambda_i^{(1)}\geq 0, 1\leq i\leq n-k \nonumber \\
& & \lambda_i^{(2)}\geq 0, 1\leq i\leq n-k \nonumber \\
& &  A\w=0\nonumber \\
& & \|\w\|_2\leq 1.
\label{eq:deftau2}
\end{eqnarray}
After regrouping the terms one has
\begin{eqnarray}
\tau(A)= \max_{\lambda^{(1)},\lambda^{(2)}}\min_{\t,\w} &  & \sum_{i=1}^{n-k}\t_i(1-\lambda_i^{(1)}-\lambda_i^{(2)})
-\sum_{i=n-k+1}^{n}\w_i+\sum_{i=1}^{n-k}\lambda_i^{(1)} \w_i-\sum_{i=1}^{n-k}\lambda_i^{(2)}\w_i \nonumber \\
\mbox{subject to} & & \lambda_i^{(1)}\geq 0, 1\leq i\leq n-k \nonumber \\
& & \lambda_i^{(2)}\geq 0, 1\leq i\leq n-k\nonumber \\
& &  A\w=0\nonumber \\
& & \|\w\|_2\leq 1.
\label{eq:deftau3}
\end{eqnarray}
To make the inner minimization over $\t$ bounded one must have $(1-\lambda_i^{(1)}-\lambda_i^{(2)})=0$ for any $1\leq i\leq (n-k)$. Replacing that back in (\ref{eq:deftau3}) we obtain
\begin{eqnarray}
\tau(A)= \max_{\lambda^{(1)},\lambda^{(2)}}\min_{\w} &  &
-\sum_{i=n-k+1}^{n}\w_i+\sum_{i=1}^{n-k}\lambda_i^{(1)} \w_i-\sum_{i=1}^{n-k}\lambda_i^{(2)}\w_i  \nonumber \\
\mbox{subject to} & & \lambda_i^{(1)}\geq 0, 1\leq i\leq n-k \nonumber \\
& & \lambda_i^{(2)}\geq 0, 1\leq i\leq n-k\nonumber \\
& & 1-\lambda_i^{(1)}-\lambda_i^{(2)}=0, 1\leq i\leq n-k\nonumber \\
& &  A\w=0\nonumber \\
& & \|\w\|_2\leq 1.
\label{eq:deftau4}
\end{eqnarray}
We now write the remaining part of the dual over $\w$ (this part of the dual could have been written together with the first one over $\t$; to make the expressions lighter we split the writing in two steps)
\begin{eqnarray}
\tau(A)= \max_{\lambda^{(1)},\lambda^{(2)},\nu,\gamma}\min_{\w} &  &
-\sum_{i=n-k+1}^{n}\w_i+\sum_{i=1}^{n-k}\lambda_i^{(1)} \w_i-\sum_{i=1}^{n-k}\lambda_i^{(2)}\w_i+\nu^T A\w +\gamma\sum_{i=1}^{n}\w_i^2-\gamma  \nonumber \\
\mbox{subject to} & & \lambda_i^{(1)}\geq 0, 1\leq i\leq n-k \nonumber \\
& & \lambda_i^{(2)}\geq 0, 1\leq i\leq n-k\nonumber \\
& & 1-\lambda_i^{(1)}-\lambda_i^{(2)}=0, 1\leq i\leq n-k.
\label{eq:deftau5}
\end{eqnarray}
Further, using the last constraint one can get rid of one of the $\lambda$'s (say $\lambda^{(1)}$) and obtain
\begin{eqnarray}
\tau(A)= \max_{\lambda^{(2)},\nu,\gamma}\min_{\w} &  &
-\sum_{i=n-k+1}^{n}\w_i+\sum_{i=1}^{n-k}(1-2\lambda_i^{(2)})\w_i+\nu^T A\w +\gamma\sum_{i=1}^{n}\w_i^2-\gamma  \nonumber \\
\mbox{subject to} & & \lambda_i^{(2)}\leq 1, 1\leq i\leq n-k \nonumber \\
& & \lambda_i^{(2)}\geq 0, 1\leq i\leq n-k.\nonumber \\
\label{eq:deftau6}
\end{eqnarray}
At this point we proceed by solving the inner minimization over $\w$. To that end, let
\begin{equation}
f_1(\lambda^{(2)},\nu,\gamma,\w)=-\sum_{i=n-k+1}^{n}\w_i+\sum_{i=1}^{n-k}(1-2\lambda_i^{(2)})\w_i+\nu^T A\w +\gamma\sum_{i=1}^{n}\w_i^2-\gamma.\label{eq:deftau7}
\end{equation}
Since $f_1(\cdot)$ is convex in $\w$ we simply find the optimal $\w$ by equaling the derivative of $f_1(\cdot)$ with respect to $\w$ to zero. We then have
\begin{eqnarray}
\frac{df_1(\lambda^{(2)},\nu,\gamma,\w)}{d\w_i} & = & (1-2\lambda_i^{(2)})+\nu^T A_i +2\gamma\w_i, 1\leq i\leq n-k\nonumber \\
\frac{df_1(\lambda^{(2)},\nu,\gamma,\w)}{d\w_i} & = & -1+\nu^T A_i +2\gamma\w_i, n-k+1\leq i\leq n
\label{eq:deftau8}
\end{eqnarray}
where as expected $A_i$ is the $i$-th column of $A$. Let
\begin{equation}
\z=[(1-2\lambda_i^{(2)}),(1-2\lambda_2^{(2)}),\dots,(1-2\lambda_{n-k}^{(2)}),-1,-1,\dots,-1]^T.\label{eq:defz}
\end{equation}
From (\ref{eq:deftau8}) one easily finds
\begin{eqnarray}
\w_{opt}=\frac{-\z-A^T\nu}{2\gamma}.\label{eq:solw}
\end{eqnarray}
Removing the inner minimization over $\w$ in (\ref{eq:deftau6}) and recognizing the relation between $\z$ and $\lambda^{(2)}$ we have
\begin{eqnarray}
\tau(A)= \max_{\z,\nu,\gamma}&  &
(\z-A^T \nu)^T\w_{opt} +\gamma\|\w_{opt}\|_2^2-\gamma  \nonumber \\
\mbox{subject to}& & |\z_i|\leq 1, 1\leq i\leq n-k.\nonumber \\
& & \z_i=-1, n-k+1\leq i\leq n.
\label{eq:deftau9}
\end{eqnarray}
Finally after plugging $\w_{opt}$ from (\ref{eq:solw}) in (\ref{eq:deftau9}) we obtain
\begin{eqnarray}
\tau(A)= \max_{\z,\nu,\gamma}&  &
-\frac{\|\z-A^T \nu\|_2^2}{4\gamma}-\gamma  \nonumber \\
\mbox{subject to}& & |\z_i|\leq 1, 1\leq i\leq n-k.\nonumber \\
& & \z_i=-1, n-k+1\leq i\leq n.
\label{eq:deftau10}
\end{eqnarray}
The maximization over $\gamma$  is then trivial and one finally has
\begin{eqnarray}
\tau(A)= \max_{\z,\nu}&  &
-\|\z-A^T \nu\|_2  \nonumber \\
\mbox{subject to}& & |\z_i|\leq 1, 1\leq i\leq n-k\nonumber \\
& & \z_i=-1, n-k+1\leq i\leq n.
\label{eq:deftau11}
\end{eqnarray}
or in a more convenient form
\begin{eqnarray}
\tau(A)= -\min_{\z,\nu}&  &
\|\z-A^T \nu\|_2  \nonumber \\
\mbox{subject to}& & |\z_i|\leq 1, 1\leq i\leq n-k\nonumber \\
& & \z_i=-1, n-k+1\leq i\leq n.
\label{eq:deftau12}
\end{eqnarray}
At this point we are almost ready to switch to the probabilistic aspect of the analysis. To that end we do the last piece of transformation. Namely, we rewrite (\ref{eq:deftau12}) as
\begin{eqnarray}
\tau(A)= -\min_{\z,\nu} \max_{\|\a\|_2=1} &  &
\a^T(\z-A^T \nu)  \nonumber \\
\mbox{subject to}& & |\z_i|\leq 1, 1\leq i \leq n-k\nonumber \\
& & \z_i=-1, n-k+1\leq i\leq n.
\label{eq:deftau13}
\end{eqnarray}
Now we are ready to invoke the results from Theorem \ref{thm:Gordonmesh1}. We do so through the following lemma which is slightly modified Lemma 3.1 from \cite{Gordon88} (Lemma 3.1 is a direct consequence of Theorem \ref{thm:Gordonmesh1} and the backbone of the escape through a mesh theorem utilized in \cite{StojnicCSetam09}).
\begin{lemma}
Let $A$ be an $m\times n$ matrix with i.i.d. standard normal components. Let $\g$ and $\h$ be $n\times 1$ and $m\times 1$ vectors, respectively, with i.i.d. standard normal components. Also, let $g$ be a standard normal random variable and let $Z$ be a set such that $Z=(\z|\z_i=-1,n-k+1\leq i\leq n \quad \mbox{and} \quad |\z_i|\leq 1, 1\leq i\leq n-k)$. Then
\begin{equation}
P(\min_{\z\in Z,\nu\in R^n\setminus 0}\max_{\|\a\|_2=1}(-\a^T A^T\nu +\|\nu\|_2 g-\zeta_{\a,\z,\nu})\geq 0)\geq P(\min_{\z\in Z,\nu\in R^n\setminus 0}\max_{\|\a\|_2=1}(\|\nu\|_2\sum_{i=1}^{m}\g_i\a_i+\sum_{i=1}^{n}\h_i\nu_i-\zeta_{\a,\z,\nu})\geq 0).\label{eq:problemma}
\end{equation}\label{eq:unsignedlemma}
\end{lemma}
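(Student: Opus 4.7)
The statement is tailor-made for Gordon's comparison principle (Theorem~\ref{thm:Gordonmesh1}), with the outer minimization indexed by $i=(\z,\nu)$ and the inner maximization indexed by $j=\a$. The plan is to define the two centered Gaussian processes
\begin{equation*}
X_{(\z,\nu),\a}\;=\;-\a^T A^T\nu + \|\nu\|_2\, g,\qquad Y_{(\z,\nu),\a}\;=\;\|\nu\|_2\, \g^T\a + \h^T\nu,
\end{equation*}
treat the deterministic quantity $\zeta_{\a,\z,\nu}$ as the threshold $\lambda_{ij}$ in Theorem~\ref{thm:Gordonmesh1}, and verify the three covariance conditions there.

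The heart of the argument is a short second-moment computation. Using $E[A_{ji}A_{lk}]=\delta_{il}\delta_{jk}$ together with the mutual independence of $A,g,\g,\h$ (which kills all the cross terms between the $A$-part and $g$-part, and between the $\g$-part and $\h$-part), one obtains
\begin{equation*}
E\bigl[X_{(\z_1,\nu_1),\a_1}X_{(\z_2,\nu_2),\a_2}\bigr] \;=\; (\a_1^T\a_2)(\nu_1^T\nu_2) + \|\nu_1\|_2\|\nu_2\|_2,
\end{equation*}
\begin{equation*}
E\bigl[Y_{(\z_1,\nu_1),\a_1}Y_{(\z_2,\nu_2),\a_2}\bigr] \;=\; \|\nu_1\|_2\|\nu_2\|_2(\a_1^T\a_2) + \nu_1^T\nu_2.
\end{equation*}
Specializing to $(\a_1,\nu_1)=(\a_2,\nu_2)$ with $\|\a\|_2=1$ yields $E[X^2]=E[Y^2]=2\|\nu\|_2^2$, matching the variances, while specializing only to $\nu_1=\nu_2$ gives both same-$i$ covariances equal to $\|\nu\|_2^2(\a_1^T\a_2+1)$. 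The different-$i$ comparison collapses to the identity
\begin{equation*}
E[X_1X_2]-E[Y_1Y_2]\;=\;(1-\a_1^T\a_2)\bigl(\|\nu_1\|_2\|\nu_2\|_2-\nu_1^T\nu_2\bigr)\;\geq\;0,
\end{equation*}
both factors being nonnegative by Cauchy--Schwarz, applied once to the unit vectors $\a_1,\a_2$ and once to $\nu_1,\nu_2$.

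With the cross-covariance comparison oriented so that Theorem~\ref{thm:Gordonmesh1} assigns the larger $\bigcap\bigcup$-probability to the $X$-process, the identification $\{\min_{\z,\nu}\max_\a(X-\zeta)\geq 0\}=\bigcap_{(\z,\nu)}\bigcup_\a\{X\geq \zeta_{\a,\z,\nu}\}$ (and the parallel one for $Y$) delivers (\ref{eq:problemma}). The one non-mechanical point is that the index sets $Z$, the admissible domain of $\nu$, and the unit sphere in $\a$ are all continuous, whereas Theorem~\ref{thm:Gordonmesh1} is stated for discrete indices; I expect this to be the only real step beyond the Gordon bookkeeping above, and it will be handled by the standard separability argument (restrict to countable dense subsets, use that both processes are almost surely continuous in $(\z,\nu,\a)$ on the relevant compacta after a trivial reduction of $\nu$ to its unit sphere, and pass to the limit).
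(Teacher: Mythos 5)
Your proof is correct and is essentially the paper's own argument made explicit: the paper simply invokes Lemma 3.1 of \cite{Gordon88} with identical copies of the processes over $\z$, and your covariance computations (equal variances, equal same-index covariances, and the Cauchy--Schwarz sign of the cross-index difference $(1-\a_1^T\a_2)(\|\nu_1\|_2\|\nu_2\|_2-\nu_1^T\nu_2)\geq 0$) are exactly the verification underlying that invocation, oriented so that the $(A,g)$-process receives the larger probability. The closing separability remark is the standard and appropriate way to pass from discrete to continuous index sets.
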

\begin{proof}
The proof is exactly the same as the one of Lemma 3.1 in \cite{Gordon88}. The only difference is that one should make identical copies over $\z$ of the processes $X_{x,y},Y_{x,y}$ defined in that proof. The rest of the proof remains unaltered.
\end{proof}
Let $\zeta_{\a,\z,\nu}=\epsilon_{5}^{(g)}\sqrt{n}\|\nu\|_2-\a^T\z$ with $\epsilon_{5}^{(g)}>0$ being an arbitrarily small constant independent of $n$. Then the left-hand side of the inequality in (\ref{eq:problemma}) is then the following probability of interest
\begin{equation*}
P(\min_{\z\in Z,\nu\in R^n\setminus 0}\max_{\|\a\|_2=1}(\|\nu\|_2\sum_{i=1}^{n}\g_i\a_i+\sum_{i=1}^{m}\h_i\nu_i-\epsilon_{5}^{(g)}\sqrt{n}\|\nu\|_2+\a^T\z)\geq 0).
\end{equation*}
After solving the inner maximization over $\a$ and pulling out $\|\nu\|_2$ one has
\begin{equation*}
P(\min_{\z\in Z,\nu\in R^n\setminus 0}(\|\g+\frac{1}{\|\nu\|_2}\z\|_2+\sum_{i=1}^{m}\h_i\frac{\nu_i}{\|\nu\|_2}-\epsilon_{5}^{(g)}\sqrt{n})\geq 0).
\end{equation*}
Minimization of the second term then gives us
\begin{equation}
P(\min_{\z\in Z,\nu\in R^n\setminus 0}(\|\g+\frac{1}{\|\nu\|_2}\z\|_2)\geq \|\h\|_2+\epsilon_{5}^{(g)}\sqrt{n}).\label{eq:probanal1}
\end{equation}
Since $\h$ is a vector of $m$ i.i.d. standard normal variables it is rather trivial that $P(\|\h\|_2<(1+\epsilon_{1}^{(m)})\sqrt{m})\geq 1-e^{-\epsilon_{2}^{(m)} m}$ where $\epsilon_{1}^{(m)}>0$ is an arbitrarily small constant and $\epsilon_{2}^{(m)}$ is a constant dependent on $\epsilon_{1}^{(m)}$ but independent of $n$. Then from (\ref{eq:probanal1}) one obtains
\begin{multline}
P(\min_{\z\in Z,\nu\in R^n\setminus 0}(\|\g+\frac{1}{\|\nu\|_2}\z\|_2)\geq \|\h\|_2+\epsilon_{5}^{(g)}\sqrt{n})\\\geq (1-e^{-\epsilon_{2}^{(m)} m})
P(\min_{\z\in Z,\nu\in R^n\setminus 0}(\|\g+\frac{1}{\|\nu\|_2}\z\|_2)\geq (1+\epsilon_{1}^{(m)})\sqrt{m}+\epsilon_{5}^{(g)}\sqrt{n})).\label{eq:probanal2}
\end{multline}
Now, let $\bar{\g}=[\g_{(1)},\g_{(2)},\dots,\g_{(n-k)},\g_{n-k+1},\g_{n-k+2},\dots,\g_n]^T$, where $[\g_{(1)},\g_{(2)},\dots,\g_{(n-k)}]$ are magnitudes of $[\g_{1},\g_{2},\dots,\g_{n-k}]$ sorted in increasing order. Then clearly,
\begin{equation}
\min_{\z\in Z,\nu\in R^n\setminus 0}(\|\g+\frac{1}{\|\nu\|_2}\z\|_2)=\min_{\z\in |Z|,\nu\in R^n\setminus 0}(\|\bar{\g}-\frac{1}{\|\nu\|_2}\z\|_2)\label{eq:probanal3}
\end{equation}
where $|Z|=(\z|\z_i=-1,n-k+1\leq i\leq n \quad \mbox{and} \quad 0\leq \z_i\leq 1, 1\leq i\leq n-k)$. Moreover, the optimization on the right-hand side of (\ref{eq:probanal2}) is structurally the same as the one in equation $(15)$ in \cite{StojnicCSetam09} (actually to be more precise it is the same as the weak threshold equivalent to $(15)$). Essentially, the exact equivalence between these optimizations is achieved after in $(15)$ from \cite{StojnicCSetam09}   $\tilde{\h}$ is replaced by $\bar{\g}$, $\nu$ is replaced by $\frac{1}{\|\nu\|_2}$, $\lambda$ is restricted to the lower $(n-k)$ components, and after one additionally notes that in $(15)$ from \cite{StojnicCSetam09} $0\leq\lambda\leq \nu$, which corresponds to $0\leq \z_i\leq 1, 1\leq i\leq n-k$ introduced above (that way one would in essence obtain the weak threshold equivalent to $(15)$; this was not explicitly written anywhere in \cite{StojnicCSetam09} but is rather obvious; in \cite{StojnicCSetam09} we, instead, made a ``weak" equivalence to its $(29)$). With these replacements one can then use the machinery of \cite{StojnicCSetam09} to establish
\begin{equation}
\min_{\z\in |Z|,\nu\in R^n\setminus 0}(\|\bar{\g}+\frac{1}{\|\nu\|_2}\z\|_2)=\sqrt{\sum_{i=\cweak+1}^n\bar{\g}_i^2-\frac{((\bar{\g}
^T\z)-\sum_{i=1}^{\cweak}\bar{\g}_i)^2}{n-\cweak
}}=f_{\g}(\cweak)\label{eq:probanal4}
\end{equation}
where $\cweak$ is the solution of
\begin{equation}
\frac{(\bar{\g}
^T\z)-\sum_{i=1}^{\cweak}\bar{\g}_i}{n-\cweak
}=\bar{\g}_{\cweak}.\label{eq:defcw}
\end{equation}
As a side remark, we should point out that the key point to the success of our method is that the derivation of \cite{StojnicCSetam09}
establishes the equality in (\ref{eq:probanal4}). It is just that in \cite{StojnicCSetam09} only the ``smaller than" inequality part of this equality was utilized. At this point we have established the core of our upper-bounding arguments. The rest is just a slightly modified repetition of the derivations from \cite{StojnicCSetam09} so that we can make everything precise.

First we will define two quantities $c_{w}^{(l)}$ and $c_{w}^{(u)}$ as the solutions of the following two equations:
\begin{eqnarray}
\frac{(1-\epsilon_1^{(c)})E((\bar{\g}^T\z)-\sum_{i=1}^{\cweak^{(l)}} \bar{\g}_i)}{n-\cweak^{(l)}}-F_a^{-1}\left (\frac{(1+\epsilon_1^{(c)}\cweak^{(l)}}{n(1-\beta_w)}\right ) & = & 0\nonumber \\
\frac{(1+\epsilon_2^{(c)})E((\bar{\g}^T\z)-\sum_{i=1}^{\cweak^{(u)}} \bar{\g}_i)}{n-\cweak^{(u)}}-F_a^{-1}\left (\frac{(1-\epsilon_2^{(c)}\cweak^{(u)}}{n(1-\beta_w)}\right ) & = & 0.\label{eq:probanal44}
\end{eqnarray}
where $F_a^{-1}(\cdot)$ is the inverse cdf of the random variable $|X|$, $X$ is the standard normal random variable, and $\epsilon_{i}^{(c)}>0,1\leq i\leq 2$ are arbitrarily small constants independent of $n$. It follows then directly from the derivation $(32)-(39)$ in \cite{StojnicCSetam09} that
\begin{equation}
P(c_w\in\{c_w^{(l)},c_w^{(u)}\})\geq 1-e^{-\epsilon_{3}^{(c)}n}\label{eq:probcw}
\end{equation}
where $\epsilon_{3}^{(c)}$ is a constant dependent on $\epsilon_{i}^{(c)}>0,1\leq i\leq 2$, $\cweak^{(l)}$, $\cweak^{(u)}$ but independent of $n$.
We now set $c_w=c_w^{(u)}$ and focus on (\ref{eq:probanal4}). Concentration analysis machinery of \cite{StojnicCSetam09} will help us establish a ``high probability" lower bound on $f_{\g}(c_w)$ (this will amount to nothing but reversing the concentration arguments that we have established in \cite{StojnicCSetam09}; concentration arguments are of course easy to reverse; what was harder to reverse was the part before (\ref{eq:probanal4})). We now split $f_{\g}(c_w)$ into two parts i.e.
\begin{equation}
f_{\g}(c_w)=f_{\g}^{(1)}(\cweak)-f_{\g}^{(2)}(\cweak),\label{eq:deff}
\end{equation}
where $f_{\g}^{(1)}(\cweak)=\sum_{i=\cweak+1}^n\bar{\g}_i^2$ and $f_{\g}^{(2)}(\cweak)=(\bar{\g}
^T\z)-\sum_{i=1}^{\cweak}\bar{\g}_i)$. Now, $f_{\g}^{(1)}(\cweak)$ concentrates trivially, the argument is the same as the one that can be established when $\cweak=0$ (alternatively one can repeat derivation $(37)$ from \cite{StojnicCSetam09} to obtain the Lipschitz constant and combine it with Lipschitz concentration formula $(35)$ also in \cite{StojnicCSetam09}). So we have
\begin{equation}
P(f_{\g}^{(1)}(\cweak)\geq (1-\epsilon_{1}^{(g)})Ef_{\g}^{(1)}(\cweak))>1-e^{-\epsilon_{2}^{(g)}n},\label{eq:concf1}
\end{equation}
again as usual $\epsilon_{1}^{(g)}>0$ is an arbitrarily small constant and $\epsilon_{2}^{(g)}$ is a constant dependent on $\epsilon_{1}^{(g)}$ and $\cweak$ but independent of $n$. On the other hand, concentration of $f_{\g}^{(2)}(\cweak)$ follows by reversing the $(38)$ from \cite{StojnicCSetam09}, i.e.
\begin{equation}
P(f_{\g}^{(2)}(\cweak)\geq (1+\epsilon_{3}^{(g)})Ef_{\g}^{(2)}(\cweak))>1-e^{-\epsilon_{4}^{(g)}n}\label{eq:concf2}
\end{equation}
where again as usual $\epsilon_{3}^{(g)}>0$ is an arbitrarily small constant and $\epsilon_{4}^{(g)}$ is a constant dependent on $\epsilon_{3}^{(g)}$ and $\cweak$ but independent of $n$. Combination of (\ref{eq:probanal4}), (\ref{eq:concf1}) and (\ref{eq:concf1}) gives (the only other thing one should observe here is that $E((\bar{\g}
^T\z)-\sum_{i=1}^{\cweak}\bar{\g}_i)\geq 0$)
\begin{multline}
P\left (\sqrt{\sum_{i=\cweak+1}^n\bar{\g}_i^2-\frac{((\bar{\g}
^T\z)-\sum_{i=1}^{\cweak}\bar{\g}_i)^2}{n-\cweak
}}\geq \sqrt{(1-\epsilon_{1}^{(g)})E\sum_{i=\cweak+1}^n\bar{\g}_i^2-\frac{(1+\epsilon_{3}^{(g)})^2(E((\bar{\g}
^T\z)-\sum_{i=1}^{\cweak}\bar{\g}_i))^2}{n-\cweak
}}\right )\\\geq (1-e^{-\epsilon_{2}^{(g)}n})(1-e^{-\epsilon_{4}^{(g)}n}).\label{eq:probanal5}
\end{multline}
%
Now, let
\begin{equation}
m_{w}=\frac{1}{(1+\epsilon_{1}^{(m)})^2}(\sqrt{(1-\epsilon_{1}^{(g)})E\sum_{i=\cweak+1}^n\bar{\g}_i^2-\frac{(1+\epsilon_{3}^{(g)})^2(E((\bar{\g}
^T\z)-\sum_{i=1}^{\cweak}\bar{\g}_i))^2}{n-\cweak
}}-\epsilon_{3}^{(m)}\sqrt{n}-\epsilon\sqrt{n})^2,\label{eq:defmmax}
\end{equation}
where $\epsilon_{3}^{(m)}>0$ is an arbitrarily small constant. Combining (\ref{eq:probanal2}), (\ref{eq:probanal4}), (\ref{eq:probcw}), (\ref{eq:probanal5}), and  (\ref{eq:defmmax}) we have
\begin{equation}
P(\min_{\z\in Z,\nu\in R^n\setminus 0}(\|\g+\frac{1}{\|\nu\|_2}\z\|_2)\geq \|\h\|_2+\epsilon_{5}^{(g)}\sqrt{n})\\ \geq (1-e^{-\epsilon_{2}^{(m)}m})
(1-e^{-\epsilon_{2}^{(g)} n})(1-e^{-\epsilon_{4}^{(g)} n})(1-e^{-\epsilon_{3}^{(c)}n}).\label{eq:probanal6}
\end{equation}
Further combination of (\ref{eq:problemma}), (\ref{eq:probanal1}), (\ref{eq:probanal2}), and (\ref{eq:probanal6}) gives us that if $m=m_{w}$
\begin{equation}
\hspace{-.3in}P(\min_{\z\in Z,\nu\in R^n\setminus 0}\max_{\|\a\|_2=1}(-\a^T A\nu+\a^T\z+\|\nu\|_2(g-\epsilon_{5}^{(g)}\sqrt{n}))\geq 0)\geq (1-e^{-\epsilon_{2}^{(m)}m_w})
(1-e^{-\epsilon_{2}^{(g)} n})(1-e^{-\epsilon_{4}^{(g)} n})(1-e^{-\epsilon_{3}^{(c)}n}).\label{eq:probanal7}
\end{equation}
Since $P(g\leq\epsilon_{5}^{(g)}\sqrt{n})\geq 1-e^{-\epsilon_{6}^{(g)} n}$ (where $\epsilon_{6}^{(g)}$ is, as all other $\epsilon$'s in this paper are, independent of $n$) from (\ref{eq:probanal7}) we finally have
\begin{equation}
P(\min_{\z\in Z,\nu\in R^n\setminus 0}\max_{\|\a\|_2=1}(-\a^T A\nu+\a^T\z)> 0)\geq (1-e^{-\epsilon_{2}^{(m)}m_w})
(1-e^{-\epsilon_{2}^{(g)} n})(1-e^{-\epsilon_{4}^{(g)} n})(1-e^{-\epsilon_{6}^{(g)} n})(1-e^{-\epsilon_{3}^{(c)}n}).\label{eq:probanal8}
\end{equation}
Connecting (\ref{eq:deftau13}) and (\ref{eq:probanal8}) we obtain
\begin{equation*}
P(-\tau(A)> 0)\geq (1-e^{-\epsilon_{2}^{(m)}m_w})
(1-e^{-\epsilon_{2}^{(g)} n})(1-e^{-\epsilon_{4}^{(g)} n})(1-e^{-\epsilon_{6}^{(g)} n})(1-e^{-\epsilon_{3}^{(c)}n}),
\end{equation*}
and ultimately
\begin{equation}
\lim_{n\rightarrow\infty}P(\tau(A)< 0)=\lim_{n\rightarrow\infty} (1-e^{-\epsilon_{2}^{(m)}m_w})
(1-e^{-\epsilon_{2}^{(g)} n})(1-e^{-\epsilon_{4}^{(g)} n})(1-e^{-\epsilon_{6}^{(g)} n})(1-e^{-\epsilon_{3}^{(c)}n})=1\label{eq:finaltau}
\end{equation}
which is what we established as a goal in (\ref{eq:mcrit}).
We summarize the results in the following theorem.

\begin{theorem}(Exact weak threshold)
Let $A$ be an $m\times n$ matrix in (\ref{eq:system})
with i.i.d. standard normal components. Let
the unknown $\x$ in (\ref{eq:system}) be $k$-sparse. Further, let the location and signs of nonzero elements of $\x$ be arbitrarily chosen but fixed.
Let $k,m,n$ be large
and let $\alpha=\frac{m}{n}$ and $\beta_w=\frac{k}{n}$ be constants
independent of $m$ and $n$. Let $\erfinv$ be the inverse of the standard error function associated with zero-mean unit variance Gaussian random variable.  Further,
let all $\epsilon$'s below be arbitrarily small constants.
\begin{enumerate}
\item Let $\htheta_w$, ($\beta_w\leq \htheta_w\leq 1$) be the solution of
\begin{equation}
(1-\epsilon_{1}^{(c)})(1-\beta_w)\frac{\sqrt{\frac{2}{\pi}}e^{-(\erfinv(\frac{1-\theta_w}{1-\beta_w}))^2}}{\theta_w}-\sqrt{2}\erfinv ((1+\epsilon_{1}^{(c)})\frac{1-\theta_w}{1-\beta_w})=0.\label{eq:thmweaktheta}
\end{equation}
If $\alpha$ and $\beta_w$ further satisfy
\begin{equation}
\alpha>\frac{1-\beta_w}{\sqrt{2\pi}}\left (\sqrt{2\pi}+2\frac{\sqrt{2(\erfinv(\frac{1-\htheta_w}{1-\beta_w}))^2}}{e^{(\erfinv(\frac{1-\htheta_w}{1-\beta_w}))^2}}-\sqrt{2\pi}
\frac{1-\htheta_w}{1-\beta_w}\right )+\beta_w
-\frac{\left ((1-\beta_w)\sqrt{\frac{2}{\pi}}e^{-(\erfinv(\frac{1-\hat{\theta}_w}{1-\beta_w}))^2}\right )^2}{\hat{\theta}_w}\label{eq:thmweakalpha}
\end{equation}
then with overwhelming probability the solution of (\ref{eq:l1}) is the $k$-sparse $\x$ from (\ref{eq:system}).
\item Let $\htheta_w$, ($\beta_w\leq \htheta_w\leq 1$) be the solution of
\begin{equation}
(1+\epsilon_{2}^{(c)})(1-\beta_w)\frac{\sqrt{\frac{2}{\pi}}e^{-(\erfinv(\frac{1-\theta_w}{1-\beta_w}))^2}}{\theta_w}-\sqrt{2}\erfinv ((1-\epsilon_{2}^{(c)})\frac{1-\theta_w}{1-\beta_w})=0.\label{eq:thmweaktheta1}
\end{equation}
If on the other hand $\alpha$ and $\beta_w$ satisfy
\begin{multline}
\hspace{-.5in}\alpha<\frac{1}{(1+\epsilon_{1}^{(m)})^2}\left ((1-\epsilon_{1}^{(g)})(\htheta_w+\frac{2(1-\beta_w)}{\sqrt{2\pi}} \frac{\sqrt{2(\erfinv(\frac{1-\htheta_w}{1-\beta_w}))^2}}{e^{(\erfinv(\frac{1-\htheta_w}{1-\beta_w}))^2}})
-\frac{\left ((1-\beta_w)\sqrt{\frac{2}{\pi}}e^{-(\erfinv(\frac{1-\hat{\theta}_w}{1-\beta_w}))^2}\right )^2}{\hat{\theta}_w(1+\epsilon_{3}^{(g)})^{-2}}\right )\label{eq:thmweakalpha}
\end{multline}
then with overwhelming probability there will be a $k$-sparse $\x$ (from a set of $\x$'s with fixed locations and signs of nonzero components) that satisfies (\ref{eq:system}) and is \textbf{not} the solution of (\ref{eq:l1}).
\end{enumerate}
\label{thm:thmweakthr}
\end{theorem}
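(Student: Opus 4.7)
The plan is to separate the two clauses of the theorem. Clause (1) is essentially a restatement of the lower-bound result from \cite{StojnicCSetam09}: one invokes the first (sufficient) half of Theorem \ref{thm:thmgenweak} and then the escape-through-a-mesh analysis of the nullspace of $A$ developed in \cite{StojnicCSetam09}. No new ingredient is needed beyond a routine rewriting of the derivation (32)--(39) of \cite{StojnicCSetam09} in the notation of the present paper and reading off the defining equation for $\htheta_w$ from the weak-threshold version of its optimization problem. I would treat clause (1) as an immediate corollary and concentrate the work on clause (2).

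For clause (2), the plan is to apply the ``moreover'' half of Theorem \ref{thm:thmgenweak} in combination with (\ref{eq:mcrit}): it is enough to prove that whenever $\alpha$ satisfies (\ref{eq:thmweakalpha}) one has $\lim_{n\to\infty}P(\tau(A)<0)=1$. I would adopt the chain of reductions already built in Section \ref{sec:unsigned} as the backbone. The first step is the min-max reformulation (\ref{eq:deftau13}) of $\tau(A)$, obtained from (\ref{eq:deftau}) by slacking the $\ell_1$ term through $\t$, dualizing over $\t$, the nullspace constraint and the norm constraint, and then rewriting the residual squared norm as a maximum over a unit sphere. The next step is to invoke Lemma \ref{eq:unsignedlemma} (the mesh version of Theorem \ref{thm:Gordonmesh1}) with the choice $\zeta_{\a,\z,\nu}=\epsilon_5^{(g)}\sqrt{n}\|\nu\|_2-\a^T\z$, which replaces $A$ by independent Gaussian vectors $\g$ and $\h$.

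After closing the inner maximum over $\a$ and minimizing the remaining term over the direction of $\nu$, the Gordon side collapses to the deterministic optimization (\ref{eq:probanal3})--(\ref{eq:probanal4}) over $\z\in Z$, whose optimum is computed exactly at the critical index $\cweak$ determined by (\ref{eq:defcw}). I would then combine standard concentration of $\|\h\|_2$, the two-sided Lipschitz concentration of $f_{\g}^{(1)}(\cweak)$ and $f_{\g}^{(2)}(\cweak)$ in (\ref{eq:concf1})--(\ref{eq:concf2}), and the concentration of $\cweak$ itself in (\ref{eq:probcw}), to deduce (\ref{eq:probanal6}) and, after absorbing the auxiliary Gaussian $g$ via (\ref{eq:probanal7}), the conclusion (\ref{eq:probanal8}). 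Translating $m_w$ in (\ref{eq:defmmax}) into $\alpha n$ and evaluating the expectations $E\sum_{i=\cweak+1}^n\bar{\g}_i^2$ and $E((\bar{\g}^T\z)-\sum_{i=1}^{\cweak}\bar{\g}_i)$ in terms of standard Gaussian $|X|$ statistics then recovers the $\erfinv$ expressions in (\ref{eq:thmweaktheta1}) and (\ref{eq:thmweakalpha}), after setting $\cweak^{(u)}=(1-\htheta_w)n$ and solving the large-$n$ limit of (\ref{eq:probanal44}).

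The main obstacle is conceptual rather than technical: the escape-through-a-mesh machinery of \cite{StojnicCSetam09} was designed to yield only a one-sided \emph{lower} bound on $\beta_w$, whereas here I need it in the reverse direction. Two points must be checked carefully. First, the identity (\ref{eq:probanal4}) must be used as an equality, not as the ``$\leq$'' inequality that \cite{StojnicCSetam09} exploits; this is the unsung counterpart to the ``removing absolute values'' converse in Theorem \ref{thm:thmgenweak}. Second, every concentration estimate borrowed from \cite{StojnicCSetam09} must run in the opposite direction without sacrificing exponential sharpness, which is why I would keep the two distinct perturbations $\epsilon_1^{(c)}$ and $\epsilon_2^{(c)}$ in (\ref{eq:probanal44}) and correspondingly in (\ref{eq:thmweaktheta}) versus (\ref{eq:thmweaktheta1}); this separation is what ultimately makes the upper and lower thresholds meet in the $\epsilon\downarrow 0$ limit and thereby certifies optimality.
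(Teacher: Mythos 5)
Your proposal is correct and follows essentially the same route as the paper: clause (1) is delegated to the lower-bound framework of \cite{StojnicCSetam09}, and clause (2) is reduced via the converse half of Theorem \ref{thm:thmgenweak} to showing $\lim_{n\to\infty}P(\tau(A)<0)=1$, which is then carried out exactly as in Section \ref{sec:unsigned} through the Lagrange-dual reformulation (\ref{eq:deftau13}), Gordon's comparison in Lemma \ref{eq:unsignedlemma}, the exact evaluation (\ref{eq:probanal4}) used as an equality, and the reversed concentration estimates leading to (\ref{eq:defmmax}) and (\ref{eq:finaltau}). You also correctly isolate the two points the paper itself emphasizes as the crux, namely the equality (rather than one-sided inequality) in (\ref{eq:probanal4}) and the direction-reversal of the concentration bounds.
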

\begin{proof}
The first part was established in \cite{StojnicCSetam09}. The second part follows from the previous discussion combining (\ref{eq:thmeqgenweak2}), (\ref{eq:deftau}), (\ref{eq:mcrit}), (\ref{eq:probanal44}), (\ref{eq:defmmax}),  and (\ref{eq:finaltau}).
\end{proof}

\noindent While the previous theorem insists on precision one can do what we will refer to as the ``deepsilonification" and obtain a way more convenient characterization. After removing all $\epsilon$'s (or say after setting them to the values that are so small when compared to $n$ that on any available finite precision machine they don't impact the above characterization) one in a more informal language then has.

\noindent Assume the setup of the above theorem. Let $\alpha_w$ and $\beta_w$ satisfy the following:

\noindent \underline{\underline{\textbf{Fundamental characterization of the $\ell_1$ performance:}}}

\begin{center}
\shadowbox{$
(1-\beta_w)\frac{\sqrt{\frac{2}{\pi}}e^{-(\erfinv(\frac{1-\alpha_w}{1-\beta_w}))^2}}{\alpha_w}-\sqrt{2}\erfinv (\frac{1-\alpha_w}{1-\beta_w})=0.
$}
-\vspace{-.5in}\begin{equation}
\label{eq:thmweaktheta2}
\end{equation}
\end{center}

Then:
\begin{enumerate}
\item If $\alpha>\alpha_w$ then with overwhelming probability the solution of (\ref{eq:l1}) is the $k$-sparse $\x$ from (\ref{eq:system}).
\item If $\alpha<\alpha_w$ then with overwhelming probability there will be a $k$-sparse $\x$ (from a set of $\x$'s with fixed locations and signs of nonzero components) that satisfies (\ref{eq:system}) and is \textbf{not} the solution of (\ref{eq:l1}).
    \end{enumerate}

As stated above equation (\ref{eq:thmweaktheta2}) is the fundamental characterization of the $\ell_1$ performance. Numerical values of the weak threshold obtained using (\ref{eq:thmweaktheta2}) were presented in \cite{StojnicCSetam09}. As it was demonstrated there, the lower bounds on the thresholds were in an excellent numerical agreement with the optimal thresholds computed in \cite{DonohoPol,DonohoUnsigned}. Theorem  \ref{thm:thmweakthr} establishes that the lower bounds computed in \cite{StojnicCSetam09} (essentially those one can compute from (\ref{eq:thmweaktheta2})) are actually the upper bounds as well and as such are the exact values of the weak thresholds. Moreover, in a companion paper \cite{StojnicEquiv10} we established a qualitative equivalence of the characterization given in (\ref{eq:thmweaktheta2}) and the results obtained in \cite{DonohoPol}. It is rather fascinating to us how well the axiomatic system of mathematics works and that two so seemingly different approaches, the geometric one from \cite{DonohoPol} and the purely probabilistic one from \cite{StojnicCSetam09}, result in exactly the same optimal characterization of the performance of $\ell_1$-optimization.

Out of respect for an incredible effort that was put forth to characterize the $\ell_1$ performance in  \cite{DonohoUnsigned,DonohoPol,StojnicCSetam09,StojnicEquiv10} we present in Figure \ref{fig:weakthr} again the plot obtained based on the ultimate characterization (\ref{eq:thmweaktheta2}).
\begin{figure}[htb]
\centering
\centerline{\epsfig{figure=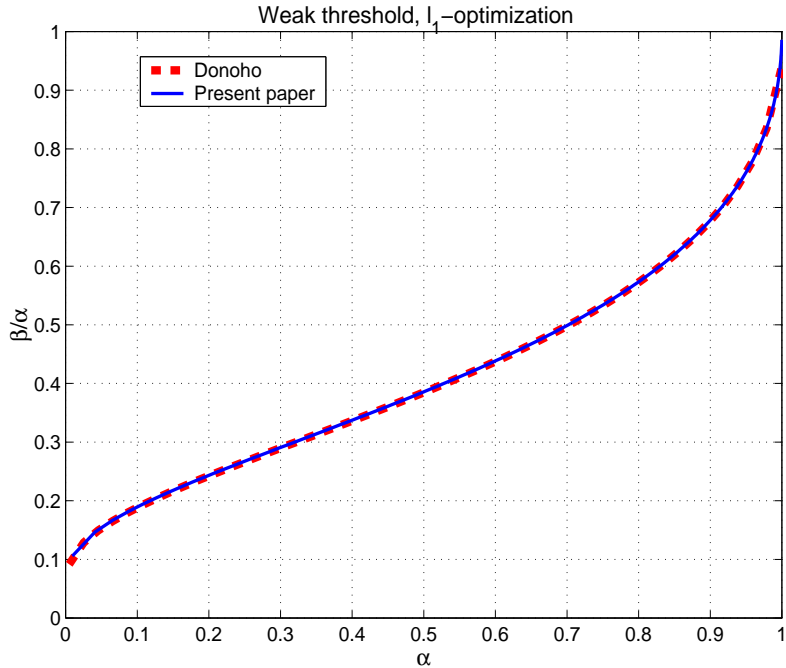,width=10.5cm,height=9cm}}
\vspace{-0.2in} \caption{\emph{Weak} threshold, $\ell_1$-optimization --- ultimate performance}
\label{fig:weakthr}
\end{figure}

\section{Upper-bounding $\beta_w$ -- signed $\x$} \label{sec:signed}

In this section we specialize the results from the previous section to the recovery of vectors $\x$ with elements known to have certain sign pattern.
Without loss of generality we assume that it is known that $\x_i\geq 0,1\leq i\leq n$. We also again assume that $\x$ is $k$-sparse, i.e. we assume that $\x$ has no more than $k$ nonzero elements. To solve (\ref{eq:system}) for such an $\x$ instead of (\ref{eq:l1}) we consider the following optimization problem (see, e.g. \cite{DonohoUnsigned,DT,StojnicCSetam09})
\begin{eqnarray}
\mbox{min} & & \|\x\|_{1}\nonumber \\
\mbox{subject to} & & A\x=\y\nonumber \\
& & \x_i\geq 0. \label{eq:l1signed}
\end{eqnarray}
In what follows we will determine the upper bound on the weak threshold that characterizes the performance of the above algorithm. Before proceeding further we quickly recall on and readjust the definition of the weak threshold. The definition of the weak threshold was already introduced in Section \ref{sec:back} when recovery of general signals (vectors) $\x$ was considered. Here, we slightly modify it so that it fits the scenario of a priori known sign patterns of elements of $\x$. Namely, for a given $\alpha$, $\betawnon$ is the maximum value of $\beta$ such that the solution (\ref{eq:l1signed}) is the $\beta n$-sparse solution of (\ref{eq:system}) for any given $\beta n$-sparse $\x$ with a fixed location of nonzero components and a priori known to be comprised of non-negative elements. Since the analysis will clearly be irrelevant with respect to what particular location of nonzero elements is chosen, we can for the simplicity of the exposition and without loss of generality assume that the components $\x_{1},\x_{2},\dots,\x_{n-k}$ of $\x$ are equal to zero and the components $\x_{n-k+1},\x_{n-k+2},\dots,\x_n$ of $\x$ are greater than or equal to zero. Under this assumption we have the following (see e.g. \cite{StojnicICASSP09}) ``signed" analogue to Theorem \ref{thm:thmgenweak}.
\begin{theorem}(Nonzero part of $\x$ has a fixed location; The signs of elements of $\x$ are a priori known)
Assume that an $m\times n$ matrix $A$ is given. Let $\x$
be a $k$-sparse vector whose nonzero components are known to be positive. Also let $\x_1=\x_2=\dots=\x_{n-k}=0.$
Further, assume that $\y=A\x$ and that $\w$ is
an $n\times 1$ vector. If
\begin{equation}
(\forall \w\in \textbf{R}^n | A\w=0, \w_{i}\geq 0,1\leq i\leq n-k) \quad  -\sum_{i=n-k+1}^n \w_i<\sum_{i=1}^{n-k}\w_{i}
\label{eq:thmeqgennonweak1}
\end{equation}
then the solution of (\ref{eq:l1signed}) is $\x$. Moreover, if
\begin{equation}
(\exists \w\in \textbf{R}^n | A\w=0, \w_{i}\geq 0,1\leq i\leq n-k) \quad  -\sum_{i=n-k+1}^n \w_i>\sum_{i=1}^{n-k}\w_{i}
\label{eq:thmeqgennonweak2}
\end{equation}
then there will be a $k$-sparse nonnegative $\x$ that satisfies (\ref{eq:system}) and is not the solution of (\ref{eq:l1signed}).
\label{thm:thmgennonweak}
\end{theorem}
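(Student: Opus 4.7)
The plan is to mirror the proof of Theorem \ref{thm:thmgenweak}, with the simplification that in the signed setting the objective and the constraint together pin down the signs of all relevant quantities, so that no ``remove the absolute values'' manoeuvre is needed; the argument collapses to a direct computation.

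For the first (``if'') part I would argue by contradiction. Assume that (\ref{eq:thmeqgennonweak1}) holds but that the solution $\hatx$ of (\ref{eq:l1signed}) differs from $\x$. Since $A\hatx = \y = A\x$, one can write $\hatx = \x + \w$ with $A\w = 0$. Feasibility of $\hatx$ in (\ref{eq:l1signed}) forces $\hatx_i \geq 0$ for all $i$; combined with $\x_i = 0$ for $1\leq i\leq n-k$ this gives $\w_i \geq 0$ on that range, so $\w$ lies in the cone appearing in (\ref{eq:thmeqgennonweak1}). Because both $\hatx$ and $\x$ are entrywise nonnegative, the optimality inequality $\|\hatx\|_1 \leq \|\x\|_1$ reduces to $\sum_{i=1}^n(\x_i+\w_i) \leq \sum_{i=1}^n \x_i$, i.e. $\sum_{i=1}^{n-k}\w_i + \sum_{i=n-k+1}^n \w_i \leq 0$, or equivalently $\sum_{i=1}^{n-k}\w_i \leq -\sum_{i=n-k+1}^n \w_i$. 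This directly contradicts (\ref{eq:thmeqgennonweak1}), so $\hatx = \x$.

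For the ``moreover'' part I would work constructively. Given $\w$ satisfying (\ref{eq:thmeqgennonweak2}), define $\x$ by $\x_i = 0$ for $1\leq i\leq n-k$ and $\x_i = \max(0,-\w_i)$ for $n-k+1\leq i\leq n$. Then $\x$ is $k$-sparse, nonnegative, has its support contained in $\{n-k+1,\dots,n\}$, and by construction $\hatx := \x+\w$ is also entrywise nonnegative (on the first block because $\w_i\geq 0$ there by the hypothesis of (\ref{eq:thmeqgennonweak2}), and on the second block by the choice of $\x_i$). Clearly $A\hatx = A\x = \y$, so $\hatx$ is feasible for (\ref{eq:l1signed}). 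Nonnegativity of both vectors turns the comparison of $\ell_1$ norms into $\|\hatx\|_1 - \|\x\|_1 = \sum_{i=1}^n \w_i = \sum_{i=1}^{n-k}\w_i + \sum_{i=n-k+1}^n \w_i$, and (\ref{eq:thmeqgennonweak2}) says exactly that this quantity is strictly negative. Hence $\|\hatx\|_1 < \|\x\|_1$ and $\x$ cannot be the solution of (\ref{eq:l1signed}), proving the claim.

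The main obstacle is essentially only notational: one must keep track of which entries are forced nonnegative by feasibility versus by the structural assumptions on $\x$. The signed case is in fact cleaner than the non-positive case of Theorem \ref{thm:thmgenweak}, since the nonnegativity constraint in (\ref{eq:l1signed}) makes the ``removing absolute values'' step of that earlier proof automatic; absolute values never appear on the side of $\hatx$ to begin with, so no separate construction of $\x$ depending on the sign pattern of $\w$ on the second block is needed beyond the elementary choice $\x_i = \max(0,-\w_i)$.
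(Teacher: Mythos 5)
Your proof is correct. The paper never actually writes out a proof of this theorem --- it presents it as the ``signed'' analogue of Theorem \ref{thm:thmgenweak} and defers to the earlier reference --- and your argument is exactly the natural adaptation of the paper's proof of that unsigned theorem: the same contradiction via $\w=\hatx-\x$ (with the extra, correctly used, observation that feasibility of $\hatx$ in (\ref{eq:l1signed}) forces $\w_i\geq 0$ on the zero block) for the first part, and for the ``moreover'' part the same explicit construction of a bad $\x$, your choice $\x_i=\max(0,-\w_i)$ being the precise counterpart of the choice $\x_j=-\w_j$ for $j\in{\cal J}$ made there; your remark that the nonnegativity constraint renders the ``remove the absolute values'' step automatic is also accurate.
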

 Below we probabilistically analyze validity of the null-space characterization given in the second part of Theorem \ref{thm:thmgennonweak}. Essentially, we will design a mechanism for computing upper bounds on $\betawnon$ (in fact, as it was the case in the previous section, since it will be slightly more convenient we will actually determine lower bounds on $\alpha$; that will of course again be conceptually the same as finding the upper-bounds on $\beta$).

We start by defining a quantity $\tauw$ which will be an analogue to $\tau$ from the previous section and will play one of the key roles below
\begin{eqnarray}
\tauw(A) =  \min & & (\sum_{i=1}^{n-k}\w_i+\sum_{i=n-k+1}^{n}\w_i) \nonumber \\
\mbox{subject to} & &  A\w=0\nonumber \\
& & \w_i\geq 0,1\leq i\leq n-k\nonumber \\
& & \|\w\|_2\leq 1.\label{eq:deftaunon}
\end{eqnarray}
We will continue to assume that the entries of $A$ are i.i.d. standard normal random variables. Similarly to what was established in (\ref{eq:mcrit}) we have that one can say that for any $\alpha$ and $\beta$ for which
\begin{equation}
\lim_{n\rightarrow\infty}P(\tauw(A)<0)=1, \label{eq:mcritnon}
\end{equation}
there is an a priori known to be nonnegative $k$-sparse $\x$ (from a set of $\x$'s that have given fixed location of non-zeros) that satisfies (\ref{eq:system}) and which (\ref{eq:l1signed}) with probability $1$ fails to find. For a fixed $\beta$ our goal will be to find the largest possible $\alpha$ for which (\ref{eq:mcritnon}) holds, i.e. for which (\ref{eq:l1signed}) fails with probability $1$.

As it was the case in the previous section, before going through the randomness of the problem and evaluation of $P(\tauw(A)<0)$, we will first try to provide an expression for $\tauw$ that is a bit more explicit than the one given by the optimization problem in (\ref{eq:deftaunon}). To facilitate following and exposition the rest of the analysis will parallel as much as possible what was presented in the previous section. To that end we start by writing the Lagrange dual of the optimization problem in (\ref{eq:deftaunon}) (as it was the case in the previous section, the strong duality trivially holds throughout the rest of this derivation). After regrouping the terms one has
\begin{eqnarray}
\tauw(A)= \max_{\lambda^{(1)},\nu,\gamma}\min_{\w} &  & \sum_{i=1}^{n-k}\w_i
+\sum_{i=n-k+1}^{n}\w_i-\sum_{i=1}^{n-k}\lambda^{(1)}\w_i+\nu^T A\w+\gamma\sum_{i=1}^{n}\w_i^2-\gamma\nonumber \\
& & \lambda_i^{(1)}\geq 0, 1\leq i\leq n-k\nonumber \\
& & \gamma\geq 0.
\label{eq:deftau3non}
\end{eqnarray}
At this point we proceed by solving the inner minimization over $\w$. To that end, let
\begin{equation}
f_1^+(\lambda^{(2)},\nu,\gamma,\w)=\sum_{i=n-k+1}^{n}\w_i+\sum_{i=1}^{n-k}(1-\lambda_i^{(1)})\w_i+\nu^T A\w +\gamma\sum_{i=1}^{n}\w_i^2-\gamma.\label{eq:deftau7non}
\end{equation}
Since $f_1^+(\cdot)$ is convex in $\w$ we simply find the optimal $\w$ by equaling the derivative of $f_1^+(\cdot)$ with respect to $\w$ to zero. We then have
\begin{eqnarray}
\frac{df_1^+(\lambda^{(1)},\nu,\gamma,\w)}{d\w_i} & = & (1-\lambda_i^{(1)})+\nu^T A_i +2\gamma\w_i, 1\leq i\leq n-k\nonumber \\
\frac{df_1^+(\lambda^{(1)},\nu,\gamma,\w)}{d\w_i} & = & 1+\nu^T A_i +2\gamma\w_i, n-k+1\leq i\leq n
\label{eq:deftau8non}
\end{eqnarray}
where as earlier $A_i$ is the $i$-th column of $A$. Let
\begin{equation}
\z^+=[(1-\lambda_1^{(1)}),(1-\lambda_2^{(1)}),\dots,(1-\lambda_{n-k}^{(1)}),1,1,\dots,1]^T.\label{eq:defznon}
\end{equation}
From (\ref{eq:deftau8}) one easily finds
\begin{eqnarray}
\w_{opt}^+=\frac{-\z^+-A^T\nu}{2\gamma}.\label{eq:solwnon}
\end{eqnarray}
Removing the inner minimization over $\w$ in (\ref{eq:deftau3non}) and recognizing the relation between $\z^+$ and $\lambda^{(1)}$ we have
\begin{eqnarray}
\tauw(A)= \max_{\z^+,\nu,\gamma}&  &
(\z-A^T \nu)^T\w_{opt}^+ +\gamma\|\w_{opt}^+\|_2^2-\gamma  \nonumber \\
\mbox{subject to}& & \z^+_i\leq 1, 1\leq i\leq n-k.\nonumber \\
& & \z^+_i=1, n-k+1\leq i\leq n\nonumber \\
& & \gamma\geq 0.
\label{eq:deftau9non}
\end{eqnarray}
Finally after plugging $\w_{opt}^+$ from (\ref{eq:solwnon}) in (\ref{eq:deftau9non}) we obtain
\begin{eqnarray}
\tauw(A)= \max_{\z^+,\nu,\gamma}&  &
-\frac{\|\z^+-A^T \nu\|_2^2}{4\gamma}-\gamma  \nonumber \\
\mbox{subject to}& & \z^+_i\leq 1, 1\leq i\leq n-k.\nonumber \\
& & \z^+_i=1, n-k+1\leq i\leq n\nonumber \\
& & \gamma\geq 0.
\label{eq:deftau10non}
\end{eqnarray}
The maximization over $\gamma$  is then trivial and one finally has
\begin{eqnarray}
\tauw(A)= \max_{\z^+,\nu}&  &
-\|\z^+-A^T \nu\|_2  \nonumber \\
\mbox{subject to}& & \z^+_i\leq 1, 1\leq i\leq n-k\nonumber \\
& & \z^+_i=1, n-k+1\leq i\leq n.
\label{eq:deftau11non}
\end{eqnarray}
or in a more convenient form
\begin{eqnarray}
\tauw(A)= -\min_{\z^+,\nu}&  &
\|\z^+-A^T \nu\|_2  \nonumber \\
\mbox{subject to}& & \z^+_i\leq 1, 1\leq i\leq n-k\nonumber \\
& & \z^+_i=1, n-k+1\leq i\leq n.
\label{eq:deftau12non}
\end{eqnarray}
At this point we are again almost ready to switch to the probabilistic aspect of the analysis. Again, the last piece of transformation is to rewrite (\ref{eq:deftau12non}) as
\begin{eqnarray}
\tauw(A)= -\min_{\z^+,\nu} \max_{\|\a\|_2=1} &  &
\a^T(\z^+-A^T \nu)  \nonumber \\
\mbox{subject to}& & \z^+_i\leq 1, 1\leq i \leq n-k\nonumber \\
& & \z^+_i=1, n-k+1\leq i\leq n.
\label{eq:deftau13non}
\end{eqnarray}
Now we are ready to invoke the results from Theorem \ref{thm:Gordonmesh1}. We do so through the following lemma which is a slightly modified Lemma \ref{eq:unsignedlemma} which itself is a slightly modified version of Lemma 3.1 from \cite{Gordon88}.
\begin{lemma}
Let $A$ be an $m\times n$ matrix with i.i.d. standard normal components. Let $\g$ and $\h$ be $n\times 1$ and $m\times 1$ vectors with i.i.d. standard normal components. Also, let $g$ be a standard normal random variable and let $Z^+$ be a set such that $Z^+=(\z^+\in R^n|\z^+_i=1,n-k+1\leq i\leq n \quad \mbox{and} \quad \z^+_i\leq 1, 1\leq i\leq n-k)$. Then
\begin{equation}
\hspace{-.6in}P(\min_{\z^+\in Z^+,\nu\in R^n\setminus 0}\max_{\|\a\|_2=1}(-\a^T A\nu +\|\nu\|_2 g-\zeta_{\a,\z^+,\nu})\geq 0)\geq P(\min_{\z^+\in Z^+,\nu\in R^n\setminus 0}\max_{\|\a\|_2=1}(\|\nu\|_2\sum_{i=1}^{n}\g_i\a_i+\sum_{i=1}^{m}\h_i\nu_i-\zeta_{\a,\z^+,\nu})\geq 0).\label{eq:problemmanon}
\end{equation}\label{eq:signedlemma}
\end{lemma}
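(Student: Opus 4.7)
The plan is to follow the proof of Lemma \ref{eq:unsignedlemma} essentially verbatim, since that proof itself only invokes Lemma 3.1 of \cite{Gordon88} together with the ``identical copies'' trick, and $\z^+$ enters the comparison only through the deterministic shift $\zeta_{\a,\z^+,\nu}$ rather than through the random processes. The shape of the constraint set --- here $Z^+$ with a one-sided bound in place of the two-sided bound used in $Z$ --- therefore plays no role in the Gaussian comparison itself and only changes the set over which the outer minimum is taken.

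Concretely, I would first introduce two centered Gaussian processes indexed by $(\a,\nu)$ with $\|\a\|_2=1$ and $\nu\in R^n\setminus 0$,
\begin{equation*}
X_{\a,\nu}=-\a^T A\nu+\|\nu\|_2\, g,\qquad Y_{\a,\nu}=\|\nu\|_2\sum_{i=1}^n \g_i\a_i+\sum_{i=1}^m \h_i\nu_i,
\end{equation*}
with $A$, $g$, $\g$, $\h$ jointly independent and standard normal. A direct variance/covariance computation (the same one needed for Lemma \ref{eq:unsignedlemma}) gives $EX_{\a,\nu}^2=EY_{\a,\nu}^2=2\|\nu\|_2^2$, matching covariances $EX_{\a,\nu}X_{\a,\nu'}=EY_{\a,\nu}Y_{\a,\nu'}$ for fixed $\a$, and, via Cauchy--Schwarz together with $\|\a\|_2=\|\a'\|_2=1$, the comparison $EX_{\a,\nu}X_{\a',\nu'}\leq EY_{\a,\nu}Y_{\a',\nu'}$ whenever $\a\neq \a'$. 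These are precisely the hypotheses of Theorem \ref{thm:Gordonmesh1}, with $\a$ in the role of the first index and $\nu$ in the role of the second.

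Next I would attach $\z^+\in Z^+$ to the minimizer index as an additional coordinate by setting $X_{\a,\z^+,\nu}:=X_{\a,\nu}$ and $Y_{\a,\z^+,\nu}:=Y_{\a,\nu}$, i.e.\ taking identical copies over $\z^+$. These duplicates trivially preserve the hypotheses of Theorem \ref{thm:Gordonmesh1}. Applying the theorem to the half-space events $\{X_{\a,\z^+,\nu}-\zeta_{\a,\z^+,\nu}\geq 0\}$ (so that the deterministic shift absorbs all the $\z^+$-dependence) and recognizing $\bigcap_{(\z^+,\nu)}\bigcup_{\a}$ as the min over $(\z^+,\nu)\in Z^+\times(R^n\setminus 0)$ of the max over $\a$ on the unit sphere yields exactly (\ref{eq:problemmanon}).

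The only genuine technicality --- the same one that already appears in the proof of Lemma \ref{eq:unsignedlemma} and in \cite{Gordon88} --- is that Theorem \ref{thm:Gordonmesh1} is stated for finite index sets, whereas here $\a$, $\z^+$ and the direction of $\nu$ all range over infinite sets. The standard remedy is to pass to $\epsilon$-nets on the unit sphere, on $Z^+$, and on a bounded region of $\nu/\|\nu\|_2$, apply the finite-index conclusion, and then take a limit using continuity of the objective in $(\a,\z^+,\nu)$. This is the main place where care is needed, but the argument goes through verbatim from the unsigned case because swapping $Z$ for $Z^+$ does not affect any of the Gaussian-comparison or discretization steps.
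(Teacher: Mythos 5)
Your proposal takes essentially the same route as the paper: the paper's own proof is just the two-line remark that the argument is identical to Gordon's Lemma 3.1 with identical copies of the processes taken over $\z^+$ (so that $\z^+$ enters only through the deterministic shift $\zeta_{\a,\z^+,\nu}$), which is exactly the comparison you set up, and you merely spell out the covariance checks and discretization that the paper leaves implicit. One small slip worth fixing: you first assign $\a$ to the \emph{first} index of Theorem \ref{thm:Gordonmesh1}, yet your final step correctly identifies the intersection (min) index with $(\z^+,\nu)$ and the union (max) index with $\a$, so the cross-covariance comparison must be verified for $\nu\neq\nu'$ rather than $\a\neq\a'$ (the computation $(\a^T\a'-1)(\|\nu\|_2\|\nu'\|_2-\nu^T\nu')$ still gives the sign needed, so the conclusion is unaffected).
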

\begin{proof}
The proof is again exactly the same as the one of Lemma 3.1 in \cite{Gordon88}. The only difference is that one should make identical copies over $\z$ of the processes $X_{x,y},Y_{x,y}$ defined in that proof. The rest of the proof remains unaltered.
\end{proof}
Let $\zeta_{\a,\z^+,\nu}=\epsilon_{5}^{(g)}\sqrt{n}\|\nu\|_2-\a^T\z^+$ with $\epsilon_{5}^{(g)}>0$ being an arbitrarily small constant independent of $n$. The left-hand side of the inequality in (\ref{eq:problemmanon}) is then the following probability of interest
\begin{equation*}
P(\min_{\z^+\in Z^+,\nu\in R^n\setminus 0}\max_{\|\a\|_2=1}(\|\nu\|_2\sum_{i=1}^{n}\g_i\a_i+\sum_{i=1}^{m}\h_i\nu_i-\epsilon_{5}^{(g)}\sqrt{n}\|\nu\|_2+\a^T\z^+)\geq 0).
\end{equation*}
One can then repeat the steps until (\ref{eq:probanal2}) from the previous section and obtain
\begin{multline}
P(\min_{\z^+\in Z^+,\nu\in R^n\setminus 0}\max_{\|\a\|_2=1}(\|\nu\|_2\sum_{i=1}^{n}\g_i\a_i+\sum_{i=1}^{m}\h_i\nu_i-\epsilon_{5}^{(g)}\sqrt{n}\|\nu\|_2+\a^T\z^+)\geq 0)\\=P(\min_{\z^+\in Z^+,\nu\in R^n\setminus 0}(\|\g+\frac{1}{\|\nu\|_2}\z^+\|_2)\geq \|\h\|_2+\epsilon_{5}^{(g)}\sqrt{n})\\\geq (1-e^{-\epsilon_{2}^{(m)} m})
P(\min_{\z^+\in Z^+,\nu\in R^n\setminus 0}(\|\g+\frac{1}{\|\nu\|_2}\z^+\|_2)\geq (1+\epsilon_{1}^{(m)})\sqrt{m}+\epsilon_{5}^{(g)}\sqrt{n})).\label{eq:probanal2non}
\end{multline}
Now, let $\bar{\g}^+=[\g_{(1)}^+,\g_{(2)}^+,\dots,\g_{(n-k)}^+,\g_{n-k+1},\g_{n-k+2},\dots,\g_n]^T$, where $[\g_{(1)}^+,\g_{(2)}^+,\dots,\g_{(n-k)}^+]$ are\\ $[\g_{1},\g_{2},\dots,\g_{n-k}]$ sorted in increasing order. Then clearly,
\begin{equation}
\min_{\z^+\in Z^+,\nu\in R^n\setminus 0}(\|\g+\frac{1}{\|\nu\|_2}\z^+\|_2)=\min_{\z^+\in |Z^+|,\nu\in R^n\setminus 0}(\|\bar{\g}^+-\frac{1}{\|\nu\|_2}\z^+\|_2)\label{eq:probanal3non}
\end{equation}
 Moreover, the optimization on the right-hand side of (\ref{eq:probanal2non}) is again structurally the same as the one in equation $(15)$ in \cite{StojnicCSetam09} (again, to be completely exact, it is the same as the nonnegative weak threshold equivalent to $(15)$). Essentially, the exact equivalence between these optimizations is achieved if in $(15)$ from \cite{StojnicCSetam09} $\tilde{\h}$ is replaced by $\bar{\g}^+$, $\nu$ is replaced by $\frac{1}{\|\nu\|_2}$, $\lambda$ is restricted to the lower $(n-k)$ components, $\z$ is replaced by $\z^+$, and one recalls that we earlier introduced simplification $\z^+_i=1-\lambda^{(1)},1\leq i\leq n-k$,$\lambda^{(1)}\geq 0$ (that would in essence be the nonnegative weak threshold equivalent to $(15)$; similarly to what we have mentioned in the previous section, the nonnegative weak threshold equivalent to $(15)$ was not explicitly written anywhere in \cite{StojnicCSetam09} but is rather obvious; of course, as mentioned earlier, in \cite{StojnicCSetam09} we made a nonnegative ``weak" equivalence to $(29)$ instead). With these replacements one can then use the machinery of \cite{StojnicCSetam09} to establish
\begin{equation}
\min_{\z^+\in Z,\nu\in R^n\setminus 0}(\|\bar{\g}^+-\frac{1}{\|\nu\|_2}\z^+\|_2)=\sqrt{\sum_{i=\cweaknon+1}^n(\bar{\g}_i^+)^2-\frac{(((\bar{\g}^+)
^T\z)-\sum_{i=1}^{\cweaknon}\bar{\g}_i^+)^2}{n-\cweaknon
}}=f_{\g^+}(\cweaknon)\label{eq:probanal4non}
\end{equation}
where $\cweaknon$ is the solution of
\begin{equation}
\frac{(\bar{\g}
^T\z)-\sum_{i=1}^{\cweaknon}\bar{\g}_i}{n-\cweaknon
}=\bar{\g}_{\cweaknon}.\label{eq:defcwnon}
\end{equation}
We recall again, that the key point to success of our method is that the derivation of \cite{StojnicCSetam09}
establishes equalities in (\ref{eq:probanal4non}) and (\ref{eq:defcwnon}). At this point we have established the core of our upper-bounding arguments for the nonnegative case. The rest would be just a repetition of the derivations done in the previous section that would make everything precise. After replacing $\bar{\g}$ by $\bar{\g}^+$ and repeating literally every step of the derivation in the previous section from (\ref{eq:probanal44}) until (\ref{eq:probanal7}) one arrives to a ``nonnegative" equivalent to (\ref{eq:probanal7})
\begin{equation}
\hspace{-.3in}P(\min_{\z^+\in Z^+,\nu\in R^n\setminus 0}\max_{\|\a\|_2=1}(-\a^T A\nu+\a^T\z^++\|\nu\|_2(g-\epsilon_{5}^{(g)}\sqrt{n}))\geq 0)\geq (1-e^{-\epsilon_{2}^{(m)}m_w})
(1-e^{-\epsilon_{2}^{(g)} n})(1-e^{-\epsilon_{4}^{(g)} n})(1-e^{-\epsilon_{3}^{(c)}n}).\label{eq:probanal7non}
\end{equation}
Repeating then the last piece of argument related to $P(g\leq\epsilon_{5}^{(g)}\sqrt{n})\geq 1-e^{-\epsilon_{6}^{(g)} n}$ one then arrives at
\begin{equation*}
P(-\tauw(A)> 0)\geq (1-e^{-\epsilon_{2}^{(m)}m_w})
(1-e^{-\epsilon_{2}^{(g)} n})(1-e^{-\epsilon_{4}^{(g)} n})(1-e^{-\epsilon_{6}^{(g)} n})(1-e^{-\epsilon_{3}^{(c)}n}),
\end{equation*}
and ultimately at
\begin{equation}
\lim_{n\rightarrow\infty}P(\tauw(A)< 0)=\lim_{n\rightarrow\infty} (1-e^{-\epsilon_{2}^{(m)}m_w})
(1-e^{-\epsilon_{2}^{(g)} n})(1-e^{-\epsilon_{4}^{(g)} n})(1-e^{-\epsilon_{6}^{(g)} n})(1-e^{-\epsilon_{3}^{(c)}n})=1,\label{eq:finaltaunon}
\end{equation}
which is what we established as a goal in (\ref{eq:mcritnon}).
We summarize the results in the following theorem.

\begin{theorem}(Exact weak threshold --- signed $\x$)
Let $A$ be an $m\times n$ matrix in (\ref{eq:system})
with i.i.d. standard normal components. Let
the unknown $\x$ in (\ref{eq:system}) be $k$-sparse and let it be a priori known that its nonzero components are positive. Further, let the location of the nonzero elements of $\x$ be arbitrarily chosen but fixed.
Let $k,m,n$ be large
and let $\alpha=\frac{m}{n}$ and $\beta_w^+=\frac{k}{n}$ be constants
independent of $m$ and $n$. Let $\erfinv$ be the inverse of the standard error function associated with zero-mean unit variance Gaussian random variable.  Further,
let all $\epsilon$'s below be arbitrarily small constants.
\begin{enumerate}
\item Let $\htheta_w^+$, ($\beta_w^+\leq \htheta_w^+\leq 1$) be the solution of
\begin{equation}
(1-\epsilon_1^{(c)})(1-\betawnon)\frac{\sqrt{\frac{1}{2\pi}}e^{-(\erfinv(2\frac{1-\thetawnon}{1-\betawnon}-1))^2}}{\thetawnon}-\sqrt{2}\erfinv ((2\frac{(1+\epsilon_1^{(c)})(1-\thetawnon)}{1-\betawnon}-1))=0.\label{eq:thmweaknontheta1}
\end{equation}
If $\alpha$ and $\beta_w^+$ further satisfy
\begin{equation}
\hspace{-.7in}\alpha>\frac{1-\betawnon}{\sqrt{2\pi}}\left (\frac{\sqrt{2(\erfinv(2\frac{1-\hthetawnon}{1-\betawnon}-1))^2}}{e^{(\erfinv(2\frac{1-\hthetawnon}{1-\betawnon}-1))^2}}\right )+\hthetawnon
-\frac{\left ((1-\betawnon)\sqrt{\frac{1}{2\pi}}e^{-(\erfinv(2\frac{1-\hthetawnon}{1-\betawnon}-1))^2}\right )^2}{\hthetawnon}\label{eq:thmweakalphanon}
\end{equation}
then with overwhelming probability the solution of (\ref{eq:l1signed}) is the positive $k$-sparse $\x$ from (\ref{eq:system}).
\item Let $\htheta_w^+$, ($\beta_w^+\leq \htheta_w^+\leq 1$) be the solution of
\begin{equation}
(1+\epsilon_2^{(c)})(1-\betawnon)\frac{\sqrt{\frac{1}{2\pi}}e^{-(\erfinv(2\frac{1-\thetawnon}{1-\betawnon}-1))^2}}{\thetawnon}-\sqrt{2}\erfinv ((2\frac{(1-\epsilon_2^{(c)})(1-\thetawnon)}{1-\betawnon}-1))=0.\label{eq:thmweaknontheta2}
\end{equation}
If on the other hand $\alpha$ and $\beta_w^+$ satisfy
\begin{multline}
\hspace{-1in}\alpha<\frac{1}{(1+\epsilon_{1}^{(m)})^2}\left ((1-\epsilon_{1}^{(g)})(\htheta_w^++\frac{(1-\beta_w^+)}{\sqrt{2\pi}} \frac{\sqrt{2(\erfinv(2\frac{1-\htheta_w^+}{1-\beta_w^+}-1))^2}}{e^{(\erfinv(2\frac{1-\htheta_w^+}{1-\beta_w^+}-1))^2}})
-\frac{\left ((1-\beta_w^+)\sqrt{\frac{1}{2\pi}}e^{-(\erfinv(2\frac{1-\hat{\theta}_w^+}{1-\beta_w^+}-1))^2}\right )^2}{\hat{\theta}_w^+(1+\epsilon_{3}^{(g)})^{-2}}\right )\label{eq:thmweaknonalpha2}
\end{multline}
then with overwhelming probability there will be a positive $k$-sparse $\x$ (from a set of $\x$'s with fixed locations of nonzero components) that satisfies (\ref{eq:system}) and is \textbf{not} the solution of (\ref{eq:l1signed}).
\end{enumerate}
\label{thm:thmweaknonthr}
\end{theorem}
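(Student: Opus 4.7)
My plan is to handle the two parts separately. The first part is already established in \cite{StojnicCSetam09}, so I would simply invoke that reference. The second part is the new upper bound, and the strategy is to reuse, verbatim, the scheme used to prove the second part of Theorem \ref{thm:thmweakthr}, but with the nonnegative-case machinery built in Section \ref{sec:signed} substituted at every step.

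The first step is to invoke the second part of Theorem \ref{thm:thmgennonweak}: a nonnegative $k$-sparse $\x$ satisfying (\ref{eq:system}) but failing to solve (\ref{eq:l1signed}) exists whenever the null-space inequality (\ref{eq:thmeqgennonweak2}) holds. Combining this with the definition (\ref{eq:deftaunon}) of $\tauw(A)$, the desired upper bound on $\alpha$ for the signed $\ell_1$ problem reduces to verifying the probabilistic criterion (\ref{eq:mcritnon}), namely $\lim_{n\to\infty}P(\tauw(A)<0)=1$. The remaining work is to lower bound $P(-\tauw(A)>0)$ by following the chain (\ref{eq:deftau3non})--(\ref{eq:finaltaunon}), which has already been set up in Section \ref{sec:signed} through Lemma \ref{eq:signedlemma} and the derivation leading to (\ref{eq:probanal7non}).

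The main computational step, and the only place where the signed case genuinely departs from the unsigned one, is in evaluating the Gaussian moment expressions inside the signed analogs of (\ref{eq:probanal44}) and (\ref{eq:defmmax}). In the signed setting one works with $\bar{\g}^+$, the vector whose first $n-k$ entries are the corresponding entries of $\g$ sorted in increasing order as actual Gaussians (not magnitudes), appended with the remaining entries of $\g$. The one-sided nature of this truncation produces the argument $2\frac{1-\thetawnon}{1-\betawnon}-1$ inside $\erfinv$ instead of the two-sided $\frac{1-\theta_w}{1-\beta_w}$ used in the unsigned case, and the corresponding Gaussian density values gain an extra factor of $\tfrac{1}{2}$ under the square root. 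Substituting these moments into the templates provided by (\ref{eq:thmweaknontheta2}) and (\ref{eq:thmweaknonalpha2}), then applying the concentration inequalities reversed as in \cite{StojnicCSetam09} for the quantities $f_{\g^+}^{(1)}(\cweaknon)$ and $f_{\g^+}^{(2)}(\cweaknon)$, produces the claimed threshold.

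I do not anticipate a substantive obstacle, since Lemma \ref{eq:signedlemma} already supplies the Gordon-type comparison for the signed constraint set, and equations (\ref{eq:probanal2non})--(\ref{eq:finaltaunon}) have already translated every step of the unsigned derivation into the nonnegative setting. The one spot requiring care is to consistently track that $\bar{\g}^+$ consists of sorted signed Gaussians rather than sorted absolute values, since this is precisely what produces the altered argument of $\erfinv$ and the changed prefactor in the density term relative to Theorem \ref{thm:thmweakthr}. Once that bookkeeping is recorded, all other steps are routine transcriptions of the unsigned proof.
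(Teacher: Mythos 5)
Your proposal matches the paper's own proof, which likewise cites \cite{StojnicCSetam09} for the first part and, for the second, simply combines (\ref{eq:thmeqgennonweak2}), (\ref{eq:deftaunon}), (\ref{eq:mcritnon}), (\ref{eq:probanal4non}), (\ref{eq:defmmax}), and (\ref{eq:finaltaunon}) with the derivation of Section \ref{sec:unsigned}. Your additional remarks on tracking $\bar{\g}^+$ as sorted signed Gaussians rather than sorted magnitudes (yielding the $2\frac{1-\thetawnon}{1-\betawnon}-1$ argument of $\erfinv$ and the modified density prefactor) are consistent with what the paper does around (\ref{eq:probanal3non})--(\ref{eq:probanal4non}), so the approach is essentially identical.
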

\begin{proof}
The first part was established in \cite{StojnicCSetam09}. The second part follows from the previous discussion combining (\ref{eq:thmeqgennonweak2}), (\ref{eq:deftaunon}), (\ref{eq:mcritnon}), (\ref{eq:probanal4non}), (\ref{eq:defmmax}),  and (\ref{eq:finaltaunon}) and the corresponding derivation from the previous section.
\end{proof}

\noindent As it was the case in the previous section, the previous theorem insists on precision and involves ``epsilon" type of characterization. However, one can again do what we, in Section \ref{sec:unsigned}, referred to as the ``deepsilonification" and obtain a way more convenient characterization. After removing all $\epsilon$'s one in a more informal language then has.

\noindent Assume the setup of the above theorem. Let $\alpha_w^+$ and $\beta_w^+$ satisfy the following:

\noindent \underline{\underline{\textbf{Fundamental characterization of the $\ell_1$ performance ($\x$ in (\ref{eq:system}) a priori known to be positive):}}}

\begin{center}
\shadowbox{$
(1-\beta_w^+)\frac{\sqrt{\frac{1}{2\pi}}e^{-(\erfinv(2\frac{1-\alpha_w^+}{1-\beta_w^+}-1))^2}}{\alpha_w^+}-\sqrt{2}\erfinv (2\frac{1-\alpha_w^+}{1-\beta_w^+}-1)=0.
$}
-\vspace{-.5in}\begin{equation}
\label{eq:thmweaknontheta2}
\end{equation}
\end{center}

Then:
\begin{enumerate}
\item If $\alpha>\alpha_w^+$ then with overwhelming probability the solution of (\ref{eq:l1signed}) is the a priori known to be positive $k$-sparse $\x$ from (\ref{eq:system}).
\item If $\alpha<\alpha_w^+$ then with overwhelming probability there will be an a priori known to be positive $k$-sparse $\x$ (from a set of $\x$'s with fixed locations of nonzero components) that satisfies (\ref{eq:system}) and is \textbf{not} the solution of (\ref{eq:l1signed}).
    \end{enumerate}

As stated above equation (\ref{eq:thmweaknontheta2}) is the fundamental characterization of the $\ell_1$ performance when applied to recovery of vectors $\x$ that are a priori known to have positive (in general of any sign) nonzero components. Numerical values of the ``nonnegative" weak threshold obtained using (\ref{eq:thmweaknontheta2}) were presented in \cite{StojnicCSetam09}. As it was demonstrated there, the lower bounds on the thresholds were in an excellent numerical agreement with the optimal thresholds computed in \cite{DT,DonohoSigned}. Theorem  \ref{thm:thmweaknonthr} establishes that the lower bounds computed in \cite{StojnicCSetam09} (essentially those one can compute from (\ref{eq:thmweaknontheta2})) are actually the upper bounds as well and as such are the exact values of the weak thresholds. Moreover, as it was the case for the general vectors $\x$ from the previous section, in a companion paper \cite{StojnicEquiv10} we established a qualitative equivalence of the characterization given in (\ref{eq:thmweaknontheta2}) and the results obtained in \cite{DT}. Again in a rather fascinating way the axiomatic system of mathematics works so well that two seemingly different approaches, the geometric one from \cite{DT} and the purely probabilistic one from \cite{StojnicCSetam09}, result in exactly the same optimal characterization of the performance of $\ell_1$-optimization.

We one more time, out of respect for an incredible effort that was put forth to characterize the $\ell_1$ performance in  \cite{DonohoSigned,DT,StojnicCSetam09,StojnicEquiv10} present in Figure \ref{fig:weakthr} the plot obtained based on the ultimate ``nonnegative" characterization (\ref{eq:thmweaknontheta2}).
\begin{figure}[htb]
\centering
\centerline{\epsfig{figure=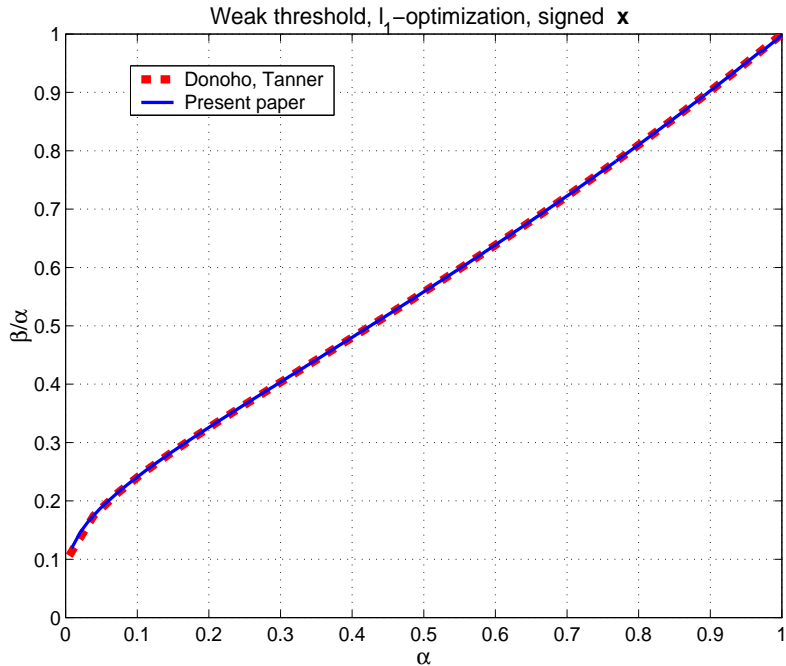,width=10.5cm,height=9cm}}
\vspace{-0.2in} \caption{\emph{Weak} threshold, $\ell_1$-optimization --- ultimate performance}
\label{fig:weakthr}
\end{figure}

\section{Discussion}
\label{sec:discuss}

In this paper we considered under-determined linear systems of equations with sparse solutions.
We looked from a theoretical point of view at a classical polynomial-time
$\ell_1$-optimization algorithm. Under the assumption that the system matrix $A$ has i.i.d. standard normal components,
we derived upper bounds on the values of the recoverable weak
thresholds in the so-called linear regime, i.e. in the regime when
the recoverable sparsity is proportional to the length of the
unknown vector. Obtained upper bounds match the corresponding lower bounds we found through a framework designed in \cite{StojnicCSetam09}. Combination of the mechanism from \cite{StojnicCSetam09} and the one that we presented in this paper is then enough to provide an explicit ultimate characterization of the success of $\ell_1$ optimization when applied in solving under-determined systems of linear equations with sparse solutions.

Further developments are pretty much then unlimited (though, of course their scientific value will never match the one of the results presented in \cite{StojnicCSetam09} and here). Namely, we hardly ever encountered a ``sparse recovery" type of the problem where the lower-bounding technique from \cite{StojnicCSetam09} was not exact. The mechanism that we designed in this paper then helps to make all the success of \cite{StojnicCSetam09} ultimate, i.e. it helps to prove what ultimately can be proved for this type of optimization problems.

Various specific problems that have been of interest in a broad scientific literature developed over the last few years, like quantifying the performance of $\ell_1$ type of optimization problems in solving systems with special structure of the solution vector (block-sparse, binary, box-constrained, low-rank matrix, partially known locations of nonzero components, just to name a few), systems with non-exact (noisy) solution vectors and/or equations can then easily be handled. In a few forthcoming companion papers we will present some of these applications. However, as it will be clear when these results appear, each of them will require some work to put the mechanism forth but in essence they all will be fairly simple extensions of what we presented in \cite{StojnicCSetam09} and here. The heart of it all will really be the lower-bounding mechanism designed in \cite{StojnicCSetam09} and the complementary upper-bounding mechanism designed in this paper and how the two ultimately meet in a somewhat magical way.

\begin{singlespace}
\bibliographystyle{plain}
\bibliography{UpperBound}

\begin{thebibliography}{10}

\bibitem{ALPTJ09}
R.~Adamczak, A.~E. Litvak, A.~Pajor, and N.~Tomczak-Jaegermann.
\newblock Restricted isometry property of matrices with independent columns and
  neighborly polytopes by random sampling.
\newblock {\em Preprint}, 2009.
\newblock available at arXiv:0904.4723.

\bibitem{AS}
F.~Afentranger and R.~Schneider.
\newblock Random projections of regular simplices.
\newblock {\em Discrete Comput. Geom.}, 7(3):219--226, 1992.

\bibitem{Tarokh}
M.~Akcakaya and V.~Tarokh.
\newblock A frame construction and a universal distortion bound for sparse
  representations.
\newblock {\em IEEE Trans. on Signal Processing}, 56(6), June 2008.

\bibitem{Bar}
R.~Baraniuk, M.~Davenport, R.~DeVore, and M.~Wakin.
\newblock A simple proof of the restricted isometry property for random
  matrices.
\newblock {\em Constructive Approximation}, 28(3), 2008.

\bibitem{BorockyHenk}
K.~Borocky and M.~Henk.
\newblock Random projections of regular polytopes.
\newblock {\em Arch. Math. (Basel)}, 73(6):465--473, 1999.

\bibitem{Crip}
E.~Candes.
\newblock The restricted isometry property and its implications for compressed
  sensing.
\newblock {\em Compte Rendus de l'Academie des Sciences, Paris, Series I, 346},
  pages 589--59, 2008.

\bibitem{CRT}
E.~Candes, J.~Romberg, and T.~Tao.
\newblock Robust uncertainty principles: exact signal reconstruction from
  highly incomplete frequency information.
\newblock {\em IEEE Trans. on Information Theory}, 52:489--509, December 2006.

\bibitem{CT}
E.~Candes and T.~Tao.
\newblock Decoding by linear programming.
\newblock {\em IEEE Trans. on Information Theory}, 51:4203--4215, Dec. 2005.

\bibitem{CWBreweighted}
E.~Candes, M.~Wakin, and S.~Boyd.
\newblock Enhancing sparsity by reweighted l1 minimization.
\newblock {\em J. Fourier Anal. Appl.}, 14:877--905, 2008.

\bibitem{SChretien08}
S.~Chretien.
\newblock An alternating ell-1 approach to the compressed sensing problem.
\newblock 2008.
\newblock available online at http://www.dsp.ece.rice.edu/cs/.

\bibitem{CoMu05}
G.~Cormode and S.~Muthukrishnan.
\newblock Combinatorial algorithms for compressed sensing.
\newblock {\em SIROCCO, 13th Colloquium on Structural Information and
  Communication Complexity}, pages 280--294, 2006.

\bibitem{DaiMil08}
W.~Dai and O.~Milenkovic.
\newblock Subspace pursuit for compressive sensing signal reconstruction.
\newblock {\em Preprint}, page available at arXiv:0803.0811, March 2008.

\bibitem{DaEl08}
A.~D'Aspremont and L.~El Ghaoui.
\newblock Testing the nullspace property using semidefinite programming.
\newblock {\em Preprint}, 2008.
\newblock available at arXiv:0807.3520.

\bibitem{DG08}
M.~E. Davies and R.~Gribonval.
\newblock Restricted isometry constants where ell-p sparse recovery can fail
  for $0 < p \leq 1$.
\newblock available online at http://www.dsp.ece.rice.edu/cs/.

\bibitem{DonohoUnsigned}
D.~Donoho.
\newblock Neighborly polytopes and sparse solutions of underdetermined linear
  equations.
\newblock 2004.
\newblock Technical report, Department of Statistics, Stanford University.

\bibitem{DonohoPol}
D.~Donoho.
\newblock High-dimensional centrally symmetric polytopes with neighborlines
  proportional to dimension.
\newblock {\em Disc. Comput. Geometry}, 35(4):617--652, 2006.

\bibitem{DT}
D.~Donoho and J.~Tanner.
\newblock Neighborliness of randomly-projected simplices in high dimensions.
\newblock {\em Proc. National Academy of Sciences}, 102(27):9452--9457, 2005.

\bibitem{DonohoSigned}
D.~Donoho and J.~Tanner.
\newblock Sparse nonnegative solutions of underdetermined linear equations by
  linear programming.
\newblock {\em Proc. National Academy of Sciences}, 102(27):9446--9451, 2005.

\bibitem{DTciss}
D.~Donoho and J.~Tanner.
\newblock Thresholds for the recovery of sparse solutions via $l_1$
  minimization.
\newblock {\em Proc. Conf. on Information Sciences and Systems}, March 2006.

\bibitem{DTbern}
D.~Donoho and J.~Tanner.
\newblock Counting the face of randomly projected hypercubes and orthants with
  application.
\newblock 2008.
\newblock available online at http://www.dsp.ece.rice.edu/cs/.

\bibitem{DTjams2010}
D.~Donoho and J.~Tanner.
\newblock Counting faces of randomly projected polytopes when the projection
  radically lowers dimension.
\newblock {\em J. Amer. Math. Soc.}, 22:1--53, 2009.

\bibitem{DonTan09Univ}
D.~Donoho and J.~Tanner.
\newblock Observed universality of phase transitions in high-dimensional
  geometry, with implications for modern data analysis and signal processing.
\newblock {\em Preprint}, 2009.
\newblock available at arXiv:0906.2530.

\bibitem{DH01}
D.~L. Donoho and X.~Huo.
\newblock Uncertainty principles and ideal atomic decompositions.
\newblock {\em IEEE Trans. Inform. Theory}, 47(7):2845--2862, November 2001.

\bibitem{DTDSomp}
D.~L. Donoho, Y.~Tsaig, I.~Drori, and J.L. Starck.
\newblock Sparse solution of underdetermined linear equations by stagewise
  orthogonal matching pursuit.
\newblock {\em 2007}.
\newblock available online at http://www.dsp.ece.rice.edu/cs/.

\bibitem{FN}
A.~Feuer and A.~Nemirovski.
\newblock On sparse representation in pairs of bases.
\newblock {\em IEEE Trans. on Information Theory}, 49:1579--1581, June 2003.

\bibitem{FL08}
S.~Foucart and M.~J. Lai.
\newblock Sparsest solutions of underdetermined linear systems via ell-q
  minimization for $0 < q \leq 1$.
\newblock available online at http://www.dsp.ece.rice.edu/cs/.

\bibitem{GiStTrVe06}
A.~Gilbert, M.~J. Strauss, J.~A. Tropp, and R.~Vershynin.
\newblock Algorithmic linear dimension reduction in the l1 norm for sparse
  vectors.
\newblock {\em 44th Annual Allerton Conference on Communication, Control, and
  Computing}, 2006.

\bibitem{GiStTrVe07}
A.~Gilbert, M.~J. Strauss, J.~A. Tropp, and R.~Vershynin.
\newblock One sketch for all: fast algorithms for compressed sensing.
\newblock {\em ACM STOC}, pages 237--246, 2007.

\bibitem{Gordon88}
Y.~Gordon.
\newblock On {M}ilman's inequality and random subspaces which escape through a
  mesh in ${R}^n$.
\newblock {\em Geometric Aspect of of functional analysis, Isr. Semin. 1986-87,
  Lect. Notes Math}, 1317, 1988.

\bibitem{GN03}
R.~Gribonval and M.~Nielsen.
\newblock Sparse representations in unions of bases.
\newblock {\em IEEE Trans. Inform. Theory}, 49(12):3320--3325, December 2003.

\bibitem{GN04}
R.~Gribonval and M.~Nielsen.
\newblock On the strong uniqueness of highly sparse expansions from redundant
  dictionaries.
\newblock In {\em Proc. Int Conf. Independent Component Analysis (ICA'04)},
  LNCS. Springer-Verlag, September 2004.

\bibitem{GN07}
R.~Gribonval and M.~Nielsen.
\newblock Highly sparse representations from dictionaries are unique and
  independent of the sparseness measure.
\newblock {\em {A}ppl. {C}omput. {H}arm. {A}nal.}, 22(3):335--355, May 2007.

\bibitem{HN}
J.~Haupt and R.~Nowak.
\newblock Signal reconstruction from noisy random projections.
\newblock {\em IEEE Trans. Information Theory}, pages 4036--4048, September
  2006.

\bibitem{InRu08}
P.~Indyk and M.~Ruzic.
\newblock Fast and effective sparse recovery using sparse random matrices.
\newblock 2008.
\newblock avialable on arxiv.

\bibitem{JXHC08}
S.~Jafarpour, W.~Xu, B.~Hassibi, and R.~Calderbank.
\newblock Efficient compressed sensing using high-quality expander graphs.
\newblock available online at http://www.dsp.ece.rice.edu/cs/.

\bibitem{JudNem08}
A.~Juditsky and A.~S. Nemirovski.
\newblock On verifiable sufficient conditions for sparse signal recovery via
  $\ell_1$ minimization.
\newblock {\em Preprint}.
\newblock available at arXiv:0809.2650.

\bibitem{LN}
N.~Linial and I.~Novik.
\newblock How neighborly can a centrally symmetric polytope be?
\newblock {\em Discrete and Computational Geometry}, 36:273--281, 2006.

\bibitem{MaVe05}
I.~Maravic and M.~Vetterli.
\newblock Sampling and reconstruction of signals with finite rate of innovation
  in the presence of noise.
\newblock {\em IEEE Trans. on Signal Processing}, 53(8):2788--2805, August
  2005.

\bibitem{PMM}
P.~McMullen.
\newblock Non-linear angle-sum relations for polyhedral cones and polytopes.
\newblock {\em Math. Proc. Cambridge Philos. Soc.}, 78(2):247--261, 1975.

\bibitem{NT08}
D.~Needell and J.~A. Tropp.
\newblock {CoSaMP}: Iterative signal recovery from incomplete and inaccurate
  samples.
\newblock {\em Applied and Computational Harmonic Analysis}, 26(3):301--321,
  2009.

\bibitem{NeVe07}
D.~Needell and R.~Vershynin.
\newblock Unifrom uncertainly principles and signal recovery via regularized
  orthogonal matching pursuit.
\newblock {\em Foundations of Computational Mathematics}, 9(3):317--334, 2009.

\bibitem{FHicassp}
F.~Parvaresh and B.~Hassibi.
\newblock Explicit measurements with almost optimal thresholds for compressed
  sensing.
\newblock {\em IEEE ICASSP}, Mar-Apr 2008.

\bibitem{Ruben}
H.~Ruben.
\newblock On the geometrical moments of skew regular simplices in
  hyperspherical space; with some applications in geometry and mathematical
  statistics.
\newblock {\em Acta. Math. (Uppsala)}, 103:1--23, 1960.

\bibitem{Ver}
M.~Rudelson and R.~Vershynin.
\newblock Geometric approach to error correcting codes and reconstruction of
  signals.
\newblock {\em International Mathematical Research Notices}, 64:4019 -- 4041,
  2005.

\bibitem{SCY08}
R.~Saab, R.~Chartrand, and O.~Yilmaz.
\newblock Stable sparse approximation via nonconvex optimization.
\newblock {\em ICASSP, IEEE Int. Conf. on Acoustics, Speech, and Signal
  Processing}, Apr. 2008.

\bibitem{SaZh08}
V.~Saligrama and M.~Zhao.
\newblock Thresholded basis pursuit: Quantizing linear programming solutions
  for optimal support recovery and approximation in compressed sensing.
\newblock 2008.
\newblock available on arxiv.

\bibitem{StojnicEquiv10}
M.~Stojnic.
\newblock A rigorous geometry-probability equivalence in characterization of
  $\ell_1$-optimization.
\newblock available at arXiv.

\bibitem{StojnicICASSP09}
M.~Stojnic.
\newblock A simple performance analysis of $\ell_1$-optimization in compressed
  sensing.
\newblock {\em ICASSP, International Conference on Acoustics, Signal and Speech
  Processing}, April 2009.

\bibitem{StojnicCSetam09}
M.~Stojnic.
\newblock Various thresholds for $\ell_1$-optimization in compressed sensing.
\newblock {\em submitted to IEEE Trans. on Information Theory}, 2009.
\newblock available at arXiv:0907.3666.

\bibitem{SPH}
M.~Stojnic, F.~Parvaresh, and B.~Hassibi.
\newblock On the reconstruction of block-sparse signals with an optimal number
  of measurements.
\newblock {\em IEEE Trans. on Signal Processing}, August 2009.

\bibitem{JATGomp}
J.~Tropp and A.~Gilbert.
\newblock Signal recovery from random measurements via orthogonal matching
  pursuit.
\newblock {\em IEEE Trans. on Information Theory}, 53(12):4655--4666, 2007.

\bibitem{JAT}
J.~A. Tropp.
\newblock Greed is good: algorithmic results for sparse approximations.
\newblock {\em IEEE Trans. on Information Theory}, 50(10):2231--2242, 2004.

\bibitem{VS}
A.~M. Vershik and P.~V. Sporyshev.
\newblock Asymptotic behavior of the number of faces of random polyhedra and
  the neighborliness problem.
\newblock {\em Selecta Mathematica Sovietica}, 11(2), 1992.

\bibitem{W}
M.~J. Wainwright.
\newblock Sharp thresholds for high-dimensional and noisy recovery of sparsity.
\newblock {\em Proc. Allerton Conference on Communication, Control, and
  Computing}, September 2006.

\bibitem{XHexpander}
W.~Xu and B.~Hassibi.
\newblock Efficient compressive sensing with determinstic guarantees using
  expander graphs.
\newblock {\em IEEE Information Theory Workshop}, September 2007.

\bibitem{XHapp}
W.~Xu and B.~Hassibi.
\newblock Compressed sensing over the grassmann manifold: A unified analytical
  framework.
\newblock 2008.
\newblock available online at http://www.dsp.ece.rice.edu/cs/.

\bibitem{Y}
Y.~Zhang.
\newblock When is missing data recoverable.
\newblock available online at http://www.dsp.ece.rice.edu/cs/.

\end{thebibliography}
\end{singlespace}

\end{document}